\newcommand{\cmark}{\ding{51} }
\newcommand{\xmark}{\ding{55} }
\newtheorem{definition}{Definition}
\newtheorem{thm}{Theorem}
\newtheorem{lemma}{Lemma}
\newtheorem{example}{Example}
\begin{document}

\title{False-Name-Proof Facility Location \\ on Discrete Structures}

\author{
Taiki Todo\institute{Graduate School of Information Science and Electrical Engineering, Kyushu University, Japan \and RIKEN Center for Advanced Intelligence Project, Japan, email: todo@inf.kyushu-u.ac.jp} 
\and 
Nodoka Okada\institute{Graduate School of Information Science and Electrical Engineering, Kyushu University, Japan, email: n-okada@agent.inf.kyushu-u.ac.jp} 
\and 
Makoto Yokoo\institute{Graduate School of Information Science and Electrical Engineering, Kyushu University, Japan \and RIKEN Center for Advanced Intelligence Project, Japan, email: yokoo@inf.kyushu-u.ac.jp} 
}

\maketitle
\bibliographystyle{ecai}

\begin{abstract}
 We consider the problem of locating a single facility 
 on a vertex in a given graph 
 based on agents' preferences,
 where the domain of the preferences is either single-peaked or single-dipped,
 depending on whether they want to access the facility (a public good)
 or be far from it (a public bad). 
 Our main interest is the existence of deterministic social choice functions 
 that are Pareto efficient and {\em false-name-proof},
 i.e., resistant to fake votes.
 We show that 
 regardless of whether preferences are single-peaked or single-dipped,
 such a social choice function exists (i) for any tree graph, and
 (ii) for a cycle graph if and only if its length is less than six.
 We also show that 
 when the preferences are single-peaked,
 such a social choice function exists for any ladder (i.e., $2 \times m$ grid) graph, and
 does not exist for any larger (hyper)grid.
\end{abstract}


\section{INTRODUCTION}
\label{sec:intro}

Social choice theory analyzes
how 
collective decisions are made
based on the preferences of individual agents.
Its typical application field is voting,
where each agent reports a preference ordering
over a set of alternatives
and a social choice function selects one.
One of the most well-studied criteria for social choice functions 
is robustness to agents' manipulations.
An SCF is said to be 
{\em truthful} if
no agent can benefit by telling a lie, 
and {\em false-name-proof} if 
no agent can benefit by casting multiple fake votes. 
The Gibbard-Satterthwaite theorem implies that 
any deterministic, truthful, and Pareto efficient social choice function 
must be dictatorial. Also, 
it is known that any false-name-proof social choice function must be randomized~\cite{conitzer:WINE:2008}.

Overcoming such negative results has
been a crucial research direction,
and there have been a bunch of research directions 
that overcome those negative results.
One of the most popular approaches is to restrict agents' preferences.
For example, when 
their preferences are restricted 
to being single-peaked, 
the well-known median voter schemes are truthful, 
Pareto efficient, and anonymous~\cite{moulin:PC:1980},
and their strict subclass, called {\em target rule}, is also false-name-proof~\cite{todo:AAMAS:2011}.
The model where agents' preferences are single-peaked has also 
been called the {\em facility location problem}, 
where each agent has an ideal point on an interval,
e.g., on a street, and a social choice function locates a public good, e.g., 
a train station, to which agents want to stay close.

\begin{table*}[tb]
 \caption{Summary of our contributions. 
 \cmark indicates 
 a false-name-proof and Pareto efficient social choice function,
 and \xmark indicates that
 no such social choice function exists.
 It remains open to clarify whether such a social choice function exists
 for general hypergrid graphs
 when agents' preferences are single-dipped.}
 \label{tbl:summary}
 \centering
 \setlength{\tabcolsep}{20pt}
 \def\arraystretch{1.5}
  \begin{tabular}{r||l|l|l} 
    & Tree & Cycle $C_{k}$ & Hypergrid \\ \hline \hline
    Single-Peaked & Any \cmark \cite{nehama:AAMAS:2019} & $\mathbf{k \leq 5}$ \cmark (Thm.~\ref{thm:cycle:peaked:small}) & Ladder \cmark \cite{nehama:AAMAS:2019}
    \\
    &
    & $\mathbf{k \geq 6}$ \xmark (Thm.~\ref{thm:cycle:peaked:large}) & \textbf{Other} \xmark (Thm.~\ref{thm:grid:peaked:large} and \ref{thm:cube:peaked}) 
    \\\hline
    Single-Dipped & \textbf{Any} \cmark (Thm.~\ref{thm:tree:dipped}) 
       & $\mathbf{k \leq 5}$ \cmark (Thm.~\ref{thm:cycle:dipped:small}) & Open \\
    &  & $\mathbf{k \geq 6}$ \xmark(Thm.~\ref{thm:cycle:dipped:large}) & \\ \hline
  \end{tabular}
\end{table*}

Moulin~\cite{moulin:PC:1980},
as well as many other works on facility location problems,
considered an interval as the set of possible alternatives,
where any point in the interval can be chosen by a social choice function. 
In several practical situations, however, 
the set of possible alternatives is discrete
and has 
slightly more complex underlying network structure,
which the agents' preferences also respect.
For example, in multi-criteria voting with two criteria, 
each of which has only three options, 
the underlying network is a three-by-three grid.
When we need to choose a time-slot to organize a joint meeting,
the problem resembles choosing a point on a discrete cycle.
Dokow et al.~\cite{dokow:EC:2012} studied truthful social choice functions 
on discrete lines and cycles. 
Ono et al.~\cite{ono:PRIMA:2017} considered 
false-name-proof social choice functions on a discrete line.
However, there has been very few works 
on false-name-proof social choice functions
on more complex structures (see 
Section~\ref{sec:related}).

%
In this paper we tackle the following question:
for which graph structures does a false-name-proof and Pareto efficient social choice function exist?
When the mechanism designer can arbitrarily modify the network structure
of the set of possible alternatives,
the problem 
is simplified. 
The network structure, however, is a metaphor of
a common feature among agents' preferences, where
modifying the network structure equals 
changing the domain of agents' preferences.
This is almost impossible in practice because
agents' preferences are their own private information.
The mechanism designer, therefore, first faces
the problem of verifying whether, under a {\em given} network structure
(or equally, a given preference domain),
a desirable social choice function exists.

Locating a bad is another possible extension 
of the facility location problem,
where a social choice function is required to locate a public bad,
e.g., a nuclear plant or a disposal station,
which each agent wants to 
avoid.
Agents' preferenes are therefore assumed to be single-dipped,
which is sometimes called obnoxious.
Actually, some existing works have studied truthful facility location 
with single-dipped preferences~\cite{barbera:SCW:2012}.
Nevertheless, to the best of our knowledge,
no work has dealt with both false-name-proofness
and more complex structures than a path, such as cycles.

Table~\ref{tbl:summary} summarizes our contribution.
Regardless of 
whether the preferences are 
single-peaked or single-dipped,
there is a false-name-proof and Pareto efficient social choice function
for any tree graph 
and any cycle graph of length less than six,
and there is no such social choice function
for any larger cycle graph.
%
For hypergrid graphs,
when preferences are single-peaked, 
such a social choice function exists if and only if 
the given hypergrid graph is a ladder, i.e., 
of dimension two and at least one 
of which has at most two vertices.

\section{RELATED WORKS}
\label{sec:related}

In the literature of facility location (and social choice with single-peaked preferences), 
one of the most popular direction is to design and analyze truthful social choice functions.
Moulin~\cite{moulin:PC:1980} proposed generalized median voter schemes,
which are the only deterministic,
truthful, Pareto efficient, and anonymous social choice functions.
Procaccia and Tennenholtz~\cite{procaccia:TEAC:2013}
proposed a general framework of approximate mechanism design, 
which evaluates the worst case performance
of truthful social choice functions from the perspective of competitive ratio. 
Recently, some models 
for locating multiple heterogenous facilities have also 
been studied~\cite{serafino:ECAI:2014,fong:AAAI:2018,anastasiadis:AAMAS:2018}.
Wada et al.~\cite{wada:AAMAS:2018} considered the agents who dynamically arrive and depart. 
Some research also considered facility location on grids~\cite{sui:IJCAI:2013,escoffier:ADT:2011}
and cycles~\cite{alon:MOR:2010,alon:DM:2010,dokow:EC:2012}.
Melo et al.~\cite{melo:EJOR:2009} overviewed applications 
in practical decision making.

Over the last decade,
false-name-proofness has also been 
scrutinized
in various mechanism design problems~\cite{yokoo:GEB:2004,aziz:AAMAS:2009,todo:AAMAS:2013,zhao:ECAI:2014,tsuruta:AAMAS:2015},
as a refinement of truthfulness for 
such open and anonymous environments,
as the internet.
Bu~\cite{bu:EL:2013} clarified a connection between false-name-proofness
and population monotonicity in general social choice problems.
Todo et al.~\cite{todo:AAMAS:2011} provided a complete characterization 
of false-name-proof and Pareto efficient social choice functions for the facility location problem 
with single-peaked preferences on a continuous line.
Lesca et al.~\cite{lesca:AAMAS:2014} also 
addressed false-name-proof social choice functions that are associated with monetary compensation.
Sonoda et al.~\cite{sonoda:AAAI:2016} considered the case of locating 
two homogeneous facilities on a continuous line and a circle.
Ono et al.~\cite{ono:PRIMA:2017} studied some discrete structures,
but focused on randomized social choice functions and clarifies the relation between
false-name-proofness and population monotonicity.

One of the most similar works to this paper 
is Nehama et al.~\cite{nehama:AAMAS:2019},
which also 
clarified the network structures
under which false-name-proof and Pareto efficient social choice functions exist for single-peaked preferences. 
One clear difference from ours is that, 
in their paper they proposed a new class of graphs, 
called ZV-line, as a generalization of path graphs. 
In our paper, on the other hand, we investigate well-known existing structures, 
namely tree, hypergrid, and cycle graphs.
ZV-line graphs contain any tree and {\em ladder}
(i.e., $2 \times m$-grid for arbitrary $m \geq 2$),
but do not cover any other graphs considered in this paper,
such as larger (hyper-)grid graphs and cycle graphs of lengths not equal to four.

Locating a public bad has also been widely studied
in both economics and computer science fields,
which essentially equals to consider single-dipped preferences.
Manjunath~\cite{manjunath:IJGT:2014} characterized truthful
social choice functions on an interval.
Lahiri et al.~\cite{lahiri:MSS:2017} studied the model
for locating two public bads.
Feigenbaum and Sethuraman~\cite{feigenbaum:ITEC:2015} considered
the cases where single-peaked and single-dipped preferences coexist.
Nevertheless, all 
of these works just focused on truthful social choice functions.
To the best of our knowledge, 
this paper is the very first work that considers false-name-proof facility location 
when agents' preferences are single-dipped.

\section{PRELIMINARIES}
\label{sec:prel}

In this section, we describe the formal model of 
the facility location problem considered in this paper.
Let $\Gamma := (V,E)$ be an 
undirected, connected graph, 
defined by the set $V$ of vertices and the set $E$ of edges.
The distance function $d: V^{2} \rightarrow \mathbb{N}_{\geq 0}$ 
is such that for any $v, w \in V$, 
$d(v,w) := \# \{e \in E | e \in s(v,w)\}$,
where $s(v,w)$ is the shortest path between $v$ and $w$. 
We say that a graph $\Gamma' := (V', E')$ has another graph $\Gamma := (V, E)$
as a {\em distance-preserving induced subgraph}
if $\Gamma'$ has $\Gamma$ as an induced subgraph,
where the corresponding pair of two vertices are represented as $v \equiv v'$,
and for any pair $v,w \in V$ 
and their corresponding vertices $v', w' \in V'$,
i.e., $v \equiv v'$ and $w \equiv w'$, 
it holds that $d(v', w') = d(v,w)$.
For these $\Gamma$ and $\Gamma'$, let us also denote 
$V'_{\mid \Gamma} := \{ v' \in V' \mid \exists v \in V, v' \equiv v \}$,
and 
$U \equiv U'$ for $U \subseteq V$ and $U' \subseteq V'_{\mid \Gamma}$
if $\forall u \in U$, $\exists u' \in U'$ such that $u \equiv u'$.
%

In this paper we focus on three classes of graphs, namely
tree, cycle, and hypergrid.
A {\em tree} graph is an undirected, connected and acyclic graph.
A special case of tree graphs is called as a {\em path} graph,
in which only two vertices have a degree of one and all the others have a degree of two.
Indeed, tree graphs are a simplest generalization of path graphs,
so that most of the properties
of path graphs,
such as the uniqueness of the shortest path between two vertices, 
carries over to tree graphs.
A {\em cycle} graph is an undirected and connected graph that only consists
of a single cycle.
When a cycle graph has $k$ vertices, we refer to it as $C_{k}$,
and its vertices are labeled, in a counter-clockwise order, 
from $v_{1}$ to $v_{k}$.

A {\em hypergrid} graph is a Cartesian product of 
more than one path graphs.
When a hypergrid $\Gamma$ is a Cartesian product of $k$ path graphs,
we call it a $k$-dimensional ($k$-D, in short) grid. 
In this paper, a 2-D grid is sometimes represented
by the number of vertices on each path,
as $l \times m$-grid.
In a given $k$-D grid graph,
each vertex $v$ is represented as a $k$-tuple
$(v_{k'})_{1 \leq k' \leq k}$.
Note that the $2$-D $2 \times 2$-grid is a cycle graph $C_{4}$.

Let $\mathcal{N}$ be the set of potential agents,
and let $N \subseteq \mathcal{N}$ be a set of participating agents.
Each agent $i \in N$ has a {\em type} $\theta_{i} \in V$.
When agent $i$ has type $\theta_{i}$,
agent $i$ is {\em located on} vertex $\theta_{i}$.
Let $\theta := (\theta_{i})_{i \in N} \in V^{|N|}$
denote a profile of the agents' types, and 
let $\theta_{-i} := (\theta_{i'})_{i' \neq i}$ denote 
their profile 
without $i$'s.
Given $\theta$, 
let $I(\theta) \subseteq V$ be the set of vertices on which
at least one agent is located,
i.e., $I(\theta) := \bigcup_{i \in N} \theta_{i}$.
Furthermore, given $\theta$ and vertex $v \in I(\theta)$, 
let $\theta_{-v}$ be the profile obtained by removing
all the agents at the vertex $v$ from $\theta$.
By definition, $I(\theta_{-v}) = I(\theta) \setminus \{v\}$.

Given $\Gamma$ and $v \in V$,
let $\succsim_{v}$ be the preference of the agent 
located on vertex $v$ over the set $V$ of alternatives,
where $\succ_{v}$ and $\sim_{v}$ indicate 
the strict and indifferent parts of $\succsim_{v}$,
respectively.
A preference $\succsim_{v}$ is {\em single-peaked} 
(resp.\ {\em single-dipped}) 
under $\Gamma$
if,
for any $w,x \in V$, 
$w \succ_{v} x$  
if and only if $d(v, w) < d(v, x)$
(resp.\  $d(v, w) > d(v, x)$),
and $w \sim_{v} x$
if and only if $d(v, w) = d(v, x)$.
That is, 
an agent located on $v$ strictly prefers 
alternative $w$, which is strictly closer to (resp.\ farther from) $v$
than other alternative $x$,
and is indifferent between these alternatives
when they are 
the same distance from $v$.
By definition, 
for each possible type $\theta_{i}$, 
the single-peaked (resp.\ single-dipped) 
preference is unique. 


A (deterministic) social choice function is 
a mapping from the set of possible profiles to the set of vertices.
Since each agent may pretend to be multiple agents in our model,
a social choice function must be defined for different-sized 
profiles. To describe this feature, 
we define a social choice function $f = (f_{N})_{N \subseteq \mathcal{N}}$ 
as a family of functions, where each $f_{N}$ is a mapping from $V^{|N|}$ to
$V$.
When a set $N$ of agents participates,
the social choice function $f$ uses function $f_{N}$ to
determine the outcome.
The function $f_{N}$ takes profile $\theta$ of types
jointly reported by $N$ as an input, 
and returns $f_{N}(\theta)$ as an outcome.
We denote $f_{N}$ as $f$ if it is clear from the context.
We further assume that a social choice function $f$ is anonymous,
i.e., for any input $\theta$ and its permutation $\theta'$,
$f(\theta') = f(\theta)$ holds.


We are now ready to define the two desirable properties of 
social choice functions: {\em false-name-proofness}
and {\em Pareto efficiency}.

\begin{definition}[False-Name-Proofness]
 a social choice function $f$ is said to be {\em false-name-proof} 
 if 
 for any $N$, 
 any $\theta$,
 any $i \in N$, 
 any $\theta_{i} \in V$, 
 any $\theta'_{i} \in V$, 
 any $\Phi_i \subseteq \mathcal{N} \setminus N$, 
 and 
 any $\theta_{\Phi_i} \in V^{|\Phi_i|}$, 
 it holds that 
 \[
 f(\theta) \succsim_{\theta_{i}} f(\theta'_{i}, \theta_{\Phi_i}, \theta_{-i}).
 \]
\end{definition}

The set $\Phi_{i}$ indicates the set of identities
added by $i$ for the manipulation.
The property coincides with the canonical {\em truthfulness}
when $\Phi_{i} = \emptyset$,
i.e., agent $i$ only uses one identity.

\begin{definition}[Pareto Efficiency]
 An alternative $v \in V$ is said to {\em Pareto dominate} $w \in V$ 
 under $\theta$ if both
\begin{itemize}
 \item $v \succsim_{\theta_{i}} w$ for all $i \in N$, and
 \item  $v \succ_{\theta_{j}} w$ for some $j \in N$
\end{itemize}
 hold.
 a social choice function $f$ is said to be {\em Pareto efficient} 
 if for any $N$ and any $\theta$, 
 no alternative $v \in V$ Pareto dominates $f(\theta)$.
\end{definition}

Given $\theta$, let $\text{PE}(\theta) \subseteq V$ indicate 
the set of all alternatives that are not Pareto dominated 
by any alternative.

The following theorem on a general property of false-name-proof and Pareto efficient social choice functions 
has recently been provided by the authors' another paper~\cite{okada:PRIMA:2019},
which justifies to focus on a special class of false-name-proof and Pareto efficient social choice functions.
a social choice function $f$ is said to {\em ignore duplicate ballots} 
(or satisfies IDB in short)
if for any pair $\theta, \theta'$, $I(\theta) = I(\theta')$ 
implies $f(\theta) = f(\theta')$.
In words, any social choice function that satisfies IDB cares about
whether there exists at least one agent on each vertex,
but does not care about how many agents are located in each vertex.

\begin{thm}[Okada et al.~\cite{okada:PRIMA:2019}]
 Assume that the domain of agents' preferences are 
 either single-peaked or single-dipped.
 If there is a false-name-proof and Pareto efficient social choice function $f$ that does not satisfy IDB,
 we then can find another false-name-proof and Pareto efficient social choice function $f'$ that also satisfies IDB
 and such that for any $N$, any $i \in N$ and any $\theta$,
 \[
 f'(\theta) \sim_{i} f(\theta).
 \]
%
\end{thm}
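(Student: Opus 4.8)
The plan is to construct $f'$ explicitly. For a nonempty occupancy set $S\subseteq V$, let $\theta^{S}$ denote the \emph{canonical profile} that places exactly one agent on each vertex of $S$; by anonymity $f(\theta^{S})$ is well defined, so I set $f'(\theta):=f(\theta^{I(\theta)})$ and write $g(S):=f(\theta^{S})$. This $f'$ depends only on $I(\theta)$, so IDB holds by construction. Pareto efficiency is almost immediate: whether one alternative Pareto-dominates another under $\theta$ depends only on $I(\theta)$, because a single-peaked or single-dipped preference depends only on the agent's vertex, and the quantifiers ``for all $i$'' and ``for some $j$'' are insensitive to duplicated ballots. Hence $\text{PE}(\theta)$ depends only on $I(\theta)$, and since $f$ is Pareto efficient, $f'(\theta)=f(\theta^{I(\theta)})\in\text{PE}(\theta^{I(\theta)})=\text{PE}(\theta)$.

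Next I would prove the indifference claim through the following lemma: for every $\theta$ with $I(\theta)=S$ and every $v\in S$, $f(\theta^{S})\sim_{v}f(\theta)$. View $\theta$ as $\theta^{S}$ augmented by duplicate ballots placed only on already-occupied vertices. For each $v\in S$, the single canonical agent at $v$ can cast exactly those duplicates as false names, so false-name-proofness of $f$ gives $f(\theta^{S})\succsim_{v}f(\theta)$ for every $v\in S$. If this inequality were strict for some $v$, then $f(\theta^{S})$ would Pareto-dominate $f(\theta)$ under occupancy $S$, contradicting the Pareto efficiency of $f(\theta)$; hence $f(\theta^{S})\sim_{v}f(\theta)$ for all $v\in S$, and specializing to $v=\theta_i$ yields $f'(\theta)\sim_{\theta_i}f(\theta)$. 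This argument uses only the distance-based meaning of $\sim$ and the abstract axioms, so it is agnostic to the single-peaked versus single-dipped distinction.

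The substantial work is false-name-proofness of $f'$. Because $f'$ is determined by occupancy, its false-name-proofness reduces to a statement about $g$: writing $R:=I(\theta_{-i})$, $p:=\theta_i$, and letting $B\neq\emptyset$ be the set of vertices occupied by agent $i$'s report together with her false names, I must show $g(R\cup\{p\})\succsim_{p}g(R\cup B)$. I would establish this by realizing the manipulated outcome through a single false-name deviation of $f$ and then applying the lemma. When $B\setminus R\neq\emptyset$, I keep one uncontrolled agent on each vertex of $R$ and let agent $i$ together with her false names occupy $B\setminus R$, so the manipulated profile is literally $\theta^{R\cup B}$; false-name-proofness of $f$ then gives $g(R\cup\{p\})\succsim_{p}g(R\cup B)$ at once, using the lemma on the honest side.

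The obstacle is the remaining case $B\subseteq R$, where the deviation vacates $p$ without creating any new occupied vertex, so the target occupancy is just $R$: here agent $i$ cannot be positioned to reproduce the clean profile $\theta^{R}$, and the lemma is silent about the external vertex $p\notin R$ (assuming $p\in R$ the two occupancies coincide and the claim is trivial). I would resolve this by a dual use of the axioms. For each $r\in R$, the canonical agent at $r$ in $\theta^{R}$ may add a single false name at $p$, turning $\theta^{R}$ into $\theta^{R\cup\{p\}}$, so false-name-proofness gives $g(R)\succsim_{r}g(R\cup\{p\})$ for every $r\in R$. If $g(R\cup\{p\})\succsim_{p}g(R)$ failed, i.e.\ $g(R)\succ_{p}g(R\cup\{p\})$, then $g(R)$ would Pareto-dominate $g(R\cup\{p\})$ under occupancy $R\cup\{p\}$, contradicting the Pareto efficiency of $g(R\cup\{p\})=f(\theta^{R\cup\{p\}})$. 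This closes the hard case and, with the earlier paragraphs, completes the construction.
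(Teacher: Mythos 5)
Your proposal cannot be checked against an in-document argument, because the paper does not prove this theorem at all: it is imported from Okada et al.~\cite{okada:PRIMA:2019} and stated without proof. Judged on its own merits, your proof is correct and self-contained. The construction $f'(\theta):=f(\theta^{I(\theta)})$ is the natural one, and all three pillars hold up: (i) the indifference lemma, proved by letting one canonical agent replicate every duplicate ballot as false names and then using Pareto efficiency of $f$ to rule out strictness; (ii) the reduction of false-name-proofness of $f'$ to the occupancy-level claim $g(R\cup\{p\})\succsim_{p}g(R\cup B)$, where the case $B\setminus R\neq\emptyset$ is realized by an actual false-name deviation against $f$ and transferred to the canonical profiles via the lemma; and (iii) the case $B\subseteq R$ with $p\notin R$, which you correctly identify as the crux --- it is a participation-flavored statement that the manipulation template alone does not give --- and which you close with the dual argument that each agent of $\theta^{R}$ could add a false name at $p$, so $g(R)\succsim_{r}g(R\cup\{p\})$ for all $r\in R$, whence a strict preference of $p$ for $g(R)$ would make $g(R)$ Pareto-dominate $f(\theta^{R\cup\{p\}})$. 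Two remarks for polish. First, $g(S)=f(\theta^{S})$ is well defined only if $f_{N}$ depends on the reported multiset and not on the identity of the agent set $N$; the paper's anonymity clause literally states only permutation-invariance within a fixed $N$, but its definition of IDB (which compares profiles of different sizes) shows that this stronger name-independence is the intended reading, so your appeal to anonymity is legitimate, though worth flagging. Second, your argument uses nothing about single-peakedness or single-dippedness beyond the fact that each vertex induces a unique complete preference over $V$, so your proof is in fact more general than the stated domain hypothesis --- a feature, not a bug.
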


That is, for any social choice function $f$ violating IDB, 
we can find another social choice function $f'$ that is 
indifferent with $f$, for any possible input $\theta$,
from the perspective of any participating agent.
Therefore, in what follows,
we focus on such false-name-proof and Pareto efficient social choice functions $f'$ that also satisfies IDB.

\section{SINGLE-PEAKED PREFERENCES}
\label{sec:peak}

In this section, 
we focus on single-peaked preferences,
i.e., every agent prefers to have the facility closer to her.
%
It is already known that for any tree graph, 
and thus for any path graph,
a false-name-proof and Pareto efficient social choice function exists. 

\begin{thm}[Nehama et al.~\cite{nehama:AAMAS:2019}]
 \label{thm:tree:peaked}
 Assume that agents' preferences are single-peaked.
 For any tree graph,
 there is a false-name-proof and Pareto efficient social choice function.
\end{thm}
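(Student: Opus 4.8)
The plan is to construct an explicit social choice function on the tree and verify that it satisfies both properties, relying on Theorem~1 so that we may restrict attention to functions that ignore duplicate ballots (IDB). Since an IDB function depends only on the occupied set $I(\theta) \subseteq V$, the task reduces to defining a map $g$ from nonempty subsets of $V$ to $V$ that is Pareto efficient and false-name-proof. The natural candidate exploits the tree structure: fix an arbitrary root (or a canonical reference vertex) $r \in V$, and for a given occupied set $S = I(\theta)$ output the vertex of the \emph{Steiner tree} (the minimal subtree spanning $S$) that is closest to $r$ — equivalently, the median-like vertex obtained by a ``generalized median'' construction on the tree. Because the tree has unique shortest paths, this vertex is well defined, and it always lies within the convex hull of the reported peaks, which is the key to Pareto efficiency.

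The first step is to establish Pareto efficiency. For single-peaked preferences on a tree, an alternative $v$ is Pareto efficient precisely when it lies on the minimal subtree (convex hull) spanning $I(\theta)$; any vertex outside this subtree is dominated because moving one step toward the subtree strictly decreases the distance to \emph{every} occupied vertex. I would prove that $\mathrm{PE}(\theta)$ equals this spanning subtree, then verify that $g(S)$ is chosen inside it, which gives Pareto efficiency immediately.

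The second and harder step is false-name-proofness. Here I would argue in two parts. First, truthfulness (the $\Phi_i = \emptyset$ case): when a single agent at $\theta_i$ misreports, the chosen vertex can only move along the tree in a way that weakly increases $d(\theta_i, \cdot)$, using the monotonicity of the median-style rule on the path from $\theta_i$ to the outcome. Second, the genuine false-name component: adding fake identities $\Phi_i$ enlarges the occupied set, which can only enlarge the spanning subtree and shift the canonical median vertex; I must show this shift never brings the outcome strictly closer to $\theta_i$. The cleanest route is to show the outcome vertex, as a function of the occupied set, is monotone with respect to set inclusion in the sense that adding vertices can only push the selected point away from any fixed peak — a property that follows from the tree-median construction because each added vertex either leaves the median fixed or moves it toward the newly occupied region and hence not toward $\theta_i$.

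The main obstacle I anticipate is the false-name step rather than Pareto efficiency or plain truthfulness. Adding fake votes does not simply correspond to a unilateral misreport, so I cannot invoke standard strategy-proofness of median rules directly; instead I need a genuine inclusion-monotonicity lemma for the chosen vertex under expansion of the occupied set, and I must ensure the tie-breaking built into the canonical choice (the reference vertex $r$) does not create a beneficial manipulation by letting an agent fabricate a vertex that biases the tie-break in her favor. I expect to handle this by choosing the tie-breaking so that the selected vertex is the one minimizing distance to $r$ within the Pareto set, and then arguing case-by-case on whether $\theta_i$ lies on the $r$-side or the far side of the current outcome; the far-side case is where a careful argument is needed to confirm that no fabricated identity can drag the facility toward the manipulator.
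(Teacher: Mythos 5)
Your construction---the vertex of the minimal subtree spanning $I(\theta)$ that is closest to a fixed reference vertex $r$---is exactly the \emph{target rule} the paper invokes~\cite{klaus:TD:2001}, whose false-name-proofness and Pareto efficiency on tree metrics the paper simply cites from~\cite{todo:AAMAS:2011} rather than re-proving, so you are on essentially the same route. Your plan is correct and completable: the Pareto set is indeed the spanning subtree, and your inclusion-monotonicity lemma (the selected vertex, i.e., the gate of $r$ in the spanning subtree, moves weakly toward $r$, hence away from every currently reported peak, when vertices are added) combines with ordinary truthfulness to give false-name-proofness, provided you compose the two steps in the order add-the-fake-identities-first-then-misreport (the reverse order fails, since after a misreport the true peak need no longer lie behind the current gate and monotonicity toward $r$ can then help the manipulator).
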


An example of such a social choice function is the target rule~\cite{klaus:TD:2001},
originally proposed for an interval, i.e., a continuous line such as $[0,1]$. 
It is shown that the target rule is false-name-proof and Pareto efficient for any tree metric~\cite{todo:AAMAS:2011}. 
Almost the same proof works for any tree graph.

In the following two subsections, we investigate the existence 
of such social choice functions for cycle and hypergrid graphs.
The two lemmata presented below are useful
to prove the impossiblity results for
single-peaked preferences.
Lemma~\ref{lem:dist-prsv} intuitivelty shows that, 
when there is no social choice function that is truthful and Pareto efficient simultaneously
for a distance-preserving induced subgraph, then the impossiblity carries over
to the original graph.
Lemma~\ref{lem:removal} shows that,
when the current alternative is still Pareto efficient
after removing a preference, then the alternative must be still chosen.
Notice that both lammata does not assume any specific structure of the graphs,
i.e., does hold not only for grids and cycles but also for general structures.

\begin{lemma}
 \label{lem:dist-prsv}
 Let $\Gamma = (V,E)$ be an arbitrary graph.
 Assume that agents' preferences are single-peaked under $\Gamma$
 and there is no truthful and Pareto efficient social choice function for $\Gamma$.
 Then, for any graph $\Gamma' = (V', E')$ that contains $\Gamma$ as a distance-preserving 
 induced subgraph, there is no truthful and Pareto efficient social choice function.
\end{lemma}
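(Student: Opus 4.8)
The plan is to argue the contrapositive: assuming a truthful and Pareto efficient social choice function $f'$ exists for $\Gamma'$, I would construct one, call it $f$, for $\Gamma$, contradicting the hypothesis. Since truthfulness is exactly false-name-proofness with $\Phi_i = \emptyset$, it suffices to reason about single deviations throughout. The natural construction restricts $f'$ to profiles located inside the embedded copy of $\Gamma$: given a profile $\theta$ of $\Gamma$-types, I let $\theta'$ be the corresponding profile on $\Gamma'$ with $\theta_i \equiv \theta_i'$ for every $i$, and define $f(\theta)$ from the outcome $f'(\theta')$.

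The transfer of both properties would rest on one observation. Because $\Gamma$ is a distance-preserving induced subgraph, for every inside vertex $w'$ (with $w' \equiv w$) and every pair of inside alternatives $x', y'$ we have $d(w',x') = d(w,x)$ and $d(w',y') = d(w,y)$; hence the single-peaked preference $\succsim_{w'}$ restricted to $V'_{\mid \Gamma}$ coincides with $\succsim_{w}$ under $\Gamma$. Granting that $f'(\theta')$ lies in $V'_{\mid \Gamma}$, I would set $f(\theta)$ to its counterpart in $V$. Pareto efficiency then transfers: any $v \in V$ that Pareto dominated $f(\theta)$ in $\Gamma$ would, via $v \equiv v'$, Pareto dominate $f'(\theta')$ in $\Gamma'$, contradicting Pareto efficiency of $f'$. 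Truthfulness transfers the same way: a profitable deviation $\theta_i \mapsto \hat\theta_i$ of agent $i$ in $\Gamma$ corresponds, through the embedding, to the deviation $\theta_i' \mapsto \hat\theta_i'$ in $\Gamma'$, and since the relevant distances are preserved it would be profitable there as well.

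The delicate and, I expect, principal obstacle is the well-definedness step: guaranteeing that the outcome of the restricted rule can be taken inside $V'_{\mid \Gamma}$. Pareto efficiency alone does not force this. Even when every agent sits inside, a ``compromise'' vertex $u' \in V' \setminus V'_{\mid \Gamma}$ that is equidistant from several inside vertices can be Pareto efficient in $\Gamma'$, and there need not exist any inside vertex that every present agent deems indifferent to $u'$. The heart of the argument must therefore exploit single-peakedness together with distance-preservation beyond the bare distance identity: between any two inside vertices some shortest path already lies inside $\Gamma$, which supplies, for each agent \emph{separately}, an inside alternative at least as close as $u'$. The remaining task is to upgrade these agent-by-agent improvements into a single inside outcome that can be selected consistently across profiles, so that the reconstructed $f$ stays truthful. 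Handling these outside compromise outcomes, rather than the routine transfer of the two properties, is where the real work lies.
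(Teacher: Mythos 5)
Your overall route coincides with the paper's: argue by contraposition, restrict a truthful and Pareto efficient $f'$ on $\Gamma'$ to the profiles supported on the embedded copy of $\Gamma$, and pull both properties back through the distance-preserving correspondence. The routine parts you describe (Pareto efficiency and truthfulness transfer, \emph{granted} that $f'(\theta')$ lies in $V'_{\mid \Gamma}$) are correct and are exactly what the paper does. The difference is at the step you yourself flag as the principal obstacle: you never establish that $f'$ can be assumed to output, on inside profiles, an inside vertex or at least something retractable to one, and without that the reconstructed $f$ is not even defined. Describing what a solution "must exploit" is not a proof, so as submitted your attempt is incomplete.

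What makes your diagnosis worth recording is that the paper's own proof does not close this gap either: it simply asserts $\text{PE}(\theta') \equiv \text{PE}(\theta)$, i.e., that "the structure of the set of Pareto efficient alternatives are totally the same," and derives well-definedness from that. This assertion is false in general, for precisely the compromise-vertex reason you give. Concretely, let $\Gamma = C_{6}$ be embedded in $\Gamma' = $ the $2\times 2\times 2$ binary cube as in Fig.~\ref{fig:cube} (the cube minus two antipodal vertices, say $000$ and $111$), and put one agent on each of the six cycle vertices. Then $000$ is Pareto efficient in the cube: any vertex dominating it would need distance at most $1$ from each of $001$, $010$, $100$, and the only such vertex is $000$ itself. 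Yet $000$ is outside the copy of $C_{6}$ and is payoff-equivalent to no inside vertex, since its distance vector has no zero entry while every occupied inside vertex gives some agent distance zero. So a Pareto efficient rule on the cube may legitimately select $000$ on this profile, and the claimed identification of $f'$ with a rule on $C_{6}$ breaks down --- notably in the very instance (Theorem~\ref{thm:cube:peaked}) where the paper invokes Lemma~\ref{lem:dist-prsv}. In short, you located the genuine soft spot of this lemma; neither your argument nor the paper's fills it, and filling it requires either proving that outside Pareto-efficient outcomes can be consistently retracted to inside ones for the graphs at hand, or reworking the downstream impossibility proofs so that they tolerate outcomes identified only by their distance vectors rather than by vertices of $\Gamma$.
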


\begin{proof}
 Consider an arbitrarily chosen social choice function $f'$ for $\Gamma'$.
 Because (i) $\Gamma'$ has $\Gamma$ as a distance-preserving
 induced subgraph and (ii) agents' preferences are single-peaked, 
 for any profile $\theta'$ on $V'_{\mid \Gamma}$ and 
 corresponding profile $\theta$ on $V$ such that $I(\theta) = I(\theta')$, 
 it holds that
 \[
 \text{PE}(\theta') \equiv \text{PE}(\theta),
 \]
 that is, the structure of the set of Pareto efficient
 alternatives are totally the same.
 Therefore, for the arbitrarily chosen social choice function $f'$ for $\Gamma'$,
 its behavior for the set of profiles $\theta'$ such that $I(\theta') \subseteq V'_{\mid \Gamma}$
 must be equal to a social choice function $f$ for $\Gamma$,
 i.e., 
 there exists a social choice function $f$ for $\Gamma$ such that
 \[
 \forall \theta' \text{ s.t.\ } I(\theta) \subseteq V'_{\mid \Gamma},
 \forall \theta \text{ s.t.\ } I(\theta) = I(\theta'),
 f(\theta) \equiv f'(\theta').
 \]
 By the assumption, such a social choice function $f$ for $\Gamma$
 is not truthful and Pareto efficient simultaneously.
 Therefore, $f'$ also violate one of the properties.
\end{proof}

\begin{lemma}
 \label{lem:removal}
 Let $\Gamma$ be an arbitrary graph.
 Assume that agents' preferences are single-peaked under a graph $\Gamma$.
 Then, for any false-name-proof social choice function $f$, any $\theta$ and any $v \in I(\theta)$,
 \[
 [f(\theta) \in I(\theta) \land f(\theta) \in I(\theta_{-v})] 
 \Rightarrow
 [f(\theta_{-v}) = f(\theta)].
 \] 
\end{lemma}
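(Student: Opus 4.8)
The plan is to read the removal of all agents at $v$ as the \emph{reverse} of a false-name manipulation, and then use the hypothesis that the outcome sits on an occupied vertex to pin the reduced-profile outcome down exactly.

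First I would fix notation: set $w := f(\theta)$. The hypothesis gives $w \in I(\theta_{-v})$, and since $I(\theta_{-v}) = I(\theta) \setminus \{v\}$ this forces $w \neq v$ and guarantees that at least one agent, say $j$, is located at $w$ both in $\theta$ and in the reduced profile $\theta_{-v}$ (removing the agents at $v$ does not disturb an agent sitting at $w \neq v$). This agent $j$ is the lever of the whole argument.

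Next I would view $\theta_{-v}$ as the \emph{honest} profile, in which $j$ reports her true type $w$ with a single identity. Let $j$ deviate by keeping her truthful report and adding a set $\Phi_j \subseteq \mathcal{N} \setminus N$ of fake identities placed on $v$, one for each agent that $\theta$ has there. Since $\theta$ is exactly $\theta_{-v}$ together with the agents located at $v$, this deviation reproduces $\theta$ as a multiset of types, so by anonymity the outcome on it is $f(\theta) = w$; note this uses only anonymity and so does not even invoke the IDB assumption. Applying false-name-proofness to $j$ with true preference $\succsim_{w}$ then yields
\[
f(\theta_{-v}) \succsim_{w} f(\theta) = w.
\]
Finally, single-peakedness under $\Gamma$ makes $w$ the unique most-preferred point of an agent of type $w$ (as $d(w,w)=0 < d(w,x)$ for every $x \neq w$), so $f(\theta_{-v}) \succsim_{w} w$ forces $f(\theta_{-v}) = w = f(\theta)$.

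I do not expect a deep obstacle here; the only care needed is the bookkeeping of the deviation — checking that the fake identities can be drawn from the pool $\mathcal{N} \setminus N$ (guaranteed since the set of potential agents is unbounded) and that placing them on $v$ recreates $\theta$ up to anonymity. The essential idea is simply to recognize that the occupied outcome vertex $w$ hosts an agent whose peak is precisely $w$, which is what lets false-name-proofness collapse the inequality to equality.
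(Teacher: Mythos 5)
Your proof is correct and uses essentially the same argument as the paper: the agent located at $f(\theta)$, who remains present in $\theta_{-v}$, could add fake identities at $v$ to recreate $\theta$ and thereby obtain her peak, so false-name-proofness forces $f(\theta_{-v}) = f(\theta)$. The paper casts this as a proof by contradiction while you argue directly via the FNP inequality $f(\theta_{-v}) \succsim_{w} f(\theta)$ plus uniqueness of the peak, but the underlying manipulation is identical.
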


\begin{proof}
 Assume for the sake of contradiction that
 the alternative chosen by the social choice function $f$
 changes after the removal of such $v$, i.e.,
 \[
 f(\theta_{-v}) \neq f(\theta).
 \]
 Here, 
 \[
 f(\theta) \in I(\theta)
 \]
 implies that there is some agent, say $i$, located at $f(\theta)$,
 who incurs the cost of zero when $\theta$ is reported
 and is still present when $v$ is removed.
 Since 
 \[
 f(\theta) \neq f(\theta_{-v})
 \]
 and agents' preferences are single-peaked,
 such an agent $i$ incurs the cost of more than zero
 when $v$ is removed.
 Thus, the agent $i$ located at $f(\theta)$ has an
 incentive to add identities at $v$,
 so that the situation becomes identical to the case
 of $\theta$, which contradicts the assumtion that
 $f$ is false-name-proofness.
\end{proof}

\subsection{Single-Peaked Preferences on Cycles}
\label{ssec:peak:cycle}

%
In this section, we show that,
under single-peaked preferences,
there is a false-name-proof and Pareto efficient social choice function
for $C_{k}$ if and only if $k \leq 5$. 

For the if direction,
to explain the existence of such social choice functions, 
we first define a class of social choice functions,
called {\em sequential Pareto} rules.
Given cycle $C_{k}$, a sequential Pareto rule has an ordering $\sigma$
of all the alternatives in $C_{k}$. For a given input $\theta$,
it sequentially checks, in the order specified by $\sigma$,
whether the first (second, third, and so on) alternative is Pareto efficient,
and terminates when it finds a Pareto efficient one.
By definition, any sequential Pareto rule is automatically Pareto efficient.

For a continuous circle,
any truthful and Pareto efficient social choice function is dictatorial~\cite{schummer:JET:2002}.
Since choosing such a dictator
in a non-manipulable manner, when there is uncertainty on identities, 
is quite difficult,
false-name-proof and Pareto efficient social choice functions are not likely to
exist for a continuous circle.
Our results in this section thus 
demonstrate the power of the discretization 
of the alternative space;
by discretizing the set of alternatives 
so that at most five alternatives exist along with a cycle,
we can avoid falling into the impossibility.

Dokow et al.~\cite{dokow:EC:2012} showed that 
any truthful and onto social choice function is nearly dictatorial for a cycle $C_{k}$ with $k \geq 22$.
In this paper we clarify a 
stricter threshold on such an impossibility 
when agents can pretend to be multiple agents;
false-name-proof social choice functions exist for a cycle $C_{k}$ if and only if $k \leq 5$.
Theorem~\ref{thm:cycle:peaked:small} shows the if direction,
and Theorem~\ref{thm:cycle:peaked:large} shows the only if direction.

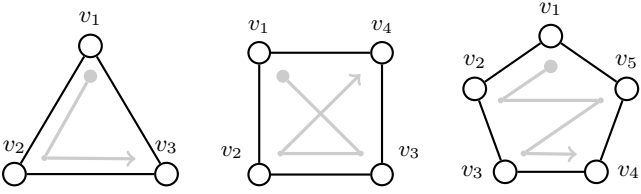
\begin{figure}[t]
 \centering
 \begin{minipage}[t]{.3\linewidth}
  \centering
 \begin{tikzpicture}[scale=1,transform shape]
  \tikzstyle{m}=[
  draw, 
  circle, 
  thick, 
  font = \scriptsize
  ];
  \tikzstyle{dummy}=[
  circle, 
  inner sep=0pt
  ];
  \node at (0,1) [style=m, "$v_{1}$"'] (1) {};
  \node at (-1,-.7) [style=m, "$v_{2}$"'] (2) {};
  \node at (1,-.7) [style=m, "$v_{3}$"'] (3) {};
  \path[draw, thick]		
  (1) edge[-] (2)
  (2) edge[-] (3)
  (3) edge[-] (1);
  \node[style=dummy, fill = gray!40, draw = gray!40, thick, minimum size = 1.5mm
  ] (1') [below = 1.5mm of 1] {}; 
  \node[style=dummy, fill =  gray!40, minimum size = .5mm, draw = gray!40, thick] (2') [above right = .7mm and 2.5mm of 2] {};
  \node[style=dummy] (3') [above left = .7mm and 2.5mm of 3] {}; 
  \path[draw, very thick, gray!40]		
  (1') edge[-] (2')
  (2') edge[->] (3');
 \end{tikzpicture}
 \end{minipage}
 \ \ \ 
 \begin{minipage}[t]{.3\linewidth} 
  \centering
 \begin{tikzpicture}[scale=.95,transform shape]
  \tikzstyle{m}=[
  draw, 
  circle, 
  thick, 
  font = \scriptsize
  ];
  \tikzstyle{dummy}=[
  circle, 
  inner sep=0pt
  ];
  \node at (-.85,.85) [style=m, label={90:$v_{1}$}] (1) {};
  \node at (-.85,-.85) [style=m, label={135:$v_{2}$}] (2) {};
  \node at (.85,-.85) [style=m, label={45:$v_{3}$}] (3) {};
  \node at (.85,.85) [style=m, label={90:$v_{4}$}] (4) {};
  \path[draw, thick]		
  (1) edge[-] (2)
  (2) edge[-] (3)
  (3) edge[-] (4)
  (4) edge[-] (1);
  \node[style=dummy, fill = gray!40, draw = gray!40, thick, minimum size = 1.5mm
  ] (1') [below right = 1.5mm and 1.5mm of 1] {}; 
  \node[style=dummy, fill =  gray!40, minimum size = .5mm, draw = gray!40, thick] (2') [above right = 1.5mm and 1.5mm of 2] {};
  \node[style=dummy, fill =  gray!40, minimum size = .5mm, draw = gray!40, thick] (3') [above left = 1.5mm and 1.5mm of 3] {};
  \node[style=dummy] (4') [below left = 1.5mm and 1.5mm of 4] {}; 
  \path[draw, very thick, gray!40]		
  (1') edge[-] (3')
  (3') edge[-] (2')
  (2') edge[->] (4');
 \end{tikzpicture}
 \end{minipage}
 \ \ \ \ \ \ 
 \begin{minipage}[t]{.3\linewidth}  
  \centering
 \begin{tikzpicture}[scale=1,transform shape]
  \tikzstyle{m}=[
  draw, 
  circle, 
  thick, 
  font = \scriptsize
  ];
  \tikzstyle{dummy}=[
  circle, 
  inner sep=0pt
  ];
  \node at (0,.8) [style=m, label={90:$v_{1}$}] (1) {};
  \node at (-1,.1) [style=m, label={90:$v_{2}$}] (2) {};
  \node at (-.6,-1) [style=m, label={180:$v_{3}$}] (3) {};
  \node at (.6,-1) [style=m, label={0:$v_{4}$}] (4) {};
  \node at (1,.1) [style=m, label={90:$v_{5}$}] (5) {};
  \path[draw, thick]		
  (1) edge[-] (2)
  (2) edge[-] (3)
  (3) edge[-] (4)
  (4) edge[-] (5)
  (5) edge[-] (1);
  \node[style=dummy, fill = gray!40, draw = gray!40, thick, minimum size = 1.5mm
  ] (1') [below = 1.5mm of 1] {}; 
  \node[style=dummy, fill =  gray!40, minimum size = .5mm, draw = gray!40, thick] (2') [below right = .1mm and 2mm of 2] {};
  \node[style=dummy, fill =  gray!40, minimum size = .5mm, draw = gray!40, thick] (3') [above right = 1mm and 1mm of 3] {};
  \node[style=dummy] (4') [above left = 1mm and 1mm of 4] {};
  \node[style=dummy, fill =  gray!40, minimum size = .5mm, draw = gray!40, thick] (5') [below left = .1mm and 2mm of 5] {}; 
  \path[draw, very thick, gray!40]		
  (1') edge[-] (2')
  (2') edge[-] (5')
  (5') edge[-] (3')
  (3') edge[->] (4');
 \end{tikzpicture}
 \end{minipage}
 \caption{Sequential Pareto rules for $C_{3}$, $C_{4}$, and $C_{5}$. The arrows indicate the associated ordering $\sigma$,
 according to which the Pareto efficiency condition will be checked one by one. 
 }
 \label{fig:small-cycles}
\end{figure}

\begin{thm}
 \label{thm:cycle:peaked:small}
 Let $\Gamma$ be a cycle graph $C_{k}$ s.t.\ $3 \leq k \leq 5$.
 When preferences are single-peaked, 
 there is a false-name-proof and Pareto efficient social choice function.
\end{thm}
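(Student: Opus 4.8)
The plan is to exhibit an explicit rule and verify its two properties directly. Concretely, I take $f$ to be the sequential Pareto rule associated with the ordering $\sigma$ drawn in Figure~\ref{fig:small-cycles} for the relevant $C_k$. By construction every sequential Pareto rule returns a Pareto-efficient alternative, so Pareto efficiency is immediate; moreover, since Pareto efficiency of an alternative depends on the profile $\theta$ only through the occupied set $I(\theta)$, the rule $f$ satisfies IDB. Hence, by the IDB-reduction theorem of Okada et al.~\cite{okada:PRIMA:2019}, it is legitimate to look for a false-name-proof and Pareto-efficient rule within the IDB class, and the only remaining task is to prove that this particular $f$ is false-name-proof.

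Writing $S := I(\theta)$ and using IDB, I would first translate false-name-proofness into a statement purely about occupied sets. For an agent whose true peak is $t$ and whose fellow participants occupy a set $A$, truthful reporting induces the occupied set $A \cup \{t\}$, whereas any false-name manipulation induces some occupied set $C$ with $A \subseteq C$. Thus $f$ is false-name-proof precisely when, for every $A \subseteq V$, every $t \in V$, and every reachable $C \supseteq A$,
\[
d\bigl(t, f(A \cup \{t\})\bigr) \le d\bigl(t, f(C)\bigr),
\]
i.e., truthful reporting minimizes the manipulator's distance over all occupied sets it can force. Ordinary truthfulness is the special case where $C$ is obtained from $A$ by adding a single vertex, so it is automatically covered.

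I would then catalogue the set $\text{PE}(S)$ for every occupied set $S$ on $C_k$ with $k \le 5$ --- a finite check that collapses to a handful of cases once rotational and reflectional symmetry is used. Two structural facts drive everything: (i) every occupied vertex is always Pareto efficient (the agent sitting there strictly prefers it to all others), so $S \subseteq \text{PE}(S)$ and therefore $f(S)$ is the $\sigma$-first vertex of $\text{PE}(S)$, occurring no later than the $\sigma$-first occupied vertex; and (ii) Lemma~\ref{lem:removal}, which pins down $f$ on nested sets whenever the current outcome is an occupied vertex that remains Pareto efficient. With $\text{PE}(\cdot)$ tabulated I would read off $f(S)$ for all $S$ and verify the displayed inequality against every deviation $C$.

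The hard part is the verification against reshaping manipulations, and here the design of $\sigma$ is essential. When $f(A \cup \{t\}) = t$ there is nothing to prove, so the delicate case is $f(A \cup \{t\}) = w \ne t$ with $w$ strictly earlier than $t$ in $\sigma$: the manipulator will try to add vertices, not necessarily at $t$, so as to destroy the Pareto efficiency of $w$ and of the intermediate vertices, thereby dragging the $\sigma$-first Pareto-efficient alternative toward $t$. The subtlety is that $\text{PE}$ is \emph{not} monotone under adding agents on even cycles --- on $C_4$ the antipodal equidistance lets a newly added vertex dominate a previously efficient one --- so one cannot simply argue that enlarging the occupied set only enlarges $\text{PE}$. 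I expect this non-monotone behaviour, specific to $C_4$, to be the main obstacle; the orderings in Figure~\ref{fig:small-cycles} are precisely reverse-engineered so that whichever vertices a manipulator can make efficient, the $\sigma$-first among them is never strictly closer to $t$ than $w$. For $C_3$ and $C_5$ the argument is lighter because $\text{PE}$ does behave monotonically there, leaving $C_4$ as the genuinely binding instance.
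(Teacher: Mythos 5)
Your proposal is correct and essentially reproduces the paper's own proof: the paper exhibits exactly the sequential Pareto rules of Figure~\ref{fig:small-cycles} (any ordering for $C_{3}$, the ordering $v_{1} \rightarrow v_{3} \rightarrow v_{2} \rightarrow v_{4}$ for $C_{4}$, and $v_{1} \rightarrow v_{2} \rightarrow v_{5} \rightarrow v_{3} \rightarrow v_{4}$ for $C_{5}$), notes that Pareto efficiency holds by construction, and asserts false-name-proofness, which is precisely the finite, symmetry-reduced check over occupied sets that you set up via the sound IDB-based reformulation $d\bigl(t, f(A \cup \{t\})\bigr) \le d\bigl(t, f(C)\bigr)$ for all $C \supseteq A$. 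One caution: in carrying out that check, compute $f(S)$ directly as the $\sigma$-first element of $\text{PE}(S)$ rather than via Lemma~\ref{lem:removal}, since that lemma presupposes false-name-proofness of $f$, which is exactly what you are verifying.
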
 

\begin{proof}
 It is obvious that
 any sequential Pareto rule is false-name-proof for $C_{3}$.
 For $C_{4}$, the sequential Pareto rule with 
 the ordering $v_{1} \rightarrow v_{3} \rightarrow v_{2} \rightarrow v_{4}$
 is false-name-proof, which is actually the same rule mentioned by Nehama et al~\cite{nehama:AAMAS:2019}.
 Finally, for $C_{5}$,
 the sequential Pareto rule with
 the ordering $v_{1} \rightarrow v_{2} \rightarrow v_{5} \rightarrow v_{3} \rightarrow v_{4}$
 is false-name-proof, which was also informally mentioned in \cite{nehama:AAMAS:2019}.
 These rules are described in Fig.~\ref{fig:small-cycles}.
\end{proof}

One might think that a sequential Pareto rule associated with 
any possible ordering is false-name-proof.
However, the following example shows that
the ordering must be carefully chosen 
to guarantee false-name-proofness
(and truthfulness as well).
Characterizing false-name-proof and Pareto efficient social choice functions for a given cycle graph
remains open.

\begin{example}
 Consider $C_{5}$ and a sequential Pareto rule $f$
 associated with ordering 
 $v_{1} \rightarrow v_{2} \rightarrow v_{3} \rightarrow v_{4} \rightarrow v_{5}$.
 Assume that there are three agents, whose types are
 $\theta = (v_{3}, v_{4}, v_{5})$. 
 Since $\text{PE}(\theta) = \{v_{3}, v_{4}, v_{5}\}$,
 $v_{3}$ is chosen as an outcome when all the agents reports truthfully,
 where the agent located at $v_{5}$ incurs the cost of $3$.
 However, 
 she can benefit by reporting $v_{1}$ as her type,
 since $\text{PE}(v_{1}, v_{3}, v_{4}) = V$,
 and thus $f(v_{1}, v_{3}, v_{4}) = v_{1}$ reduces her cost to $1$.
\end{example}

\begin{figure}[t]
  \centering
  \begin{tikzpicture}[scale=1,transform shape]
   \tikzstyle{m}=[
   draw, 
   circle, 
   fill = white,
   thick, 
   minimum size = 3mm, 
   inner sep = 0pt,
   font = \scriptsize
   ];
   \tikzstyle{m'}=[
   draw, 
   circle,
   fill = gray!50,
   thick,
   minimum size = 3mm, 
   inner sep = 0pt,
   font = \scriptsize
   ];
   \node at (0,0) [circle, draw, minimum size = 18mm] (c) {};
   \node[style = m', label = {60:$v_{1}$}] at (c.60+30) {\tiny $f$};
   \node[style = m', label = {120:$v_{2}$}] at (c.120+30) {};
   \node[style = m', label = {180:$v_{3}$}] at (c.180+30) {};
   \node[style = m', label = {240:$v_{4}$}] at (c.240+30) {};
   \node[style = m', label = {300:$v_{5}$}] at (c.300+30) {};
   \node[style = m', label = {0:$v_{6}$}] at (c.0+30) {};
   \node at (0,1.8) [] {\scriptsize \textbf{Profile} $\theta$};
   \node at (4,0) [circle, draw, minimum size = 18mm] (c) {};
   \node[style = m, label = {60:$v_{1}$}] at (c.60+30) {};
   \node[style = m', label = {120:$v_{2}$}] at (c.120+30) {};
   \node[style = m', label = {180:$v_{3}$}] at (c.180+30) {\tiny $f$};
   \node[style = m', label = {240:$v_{4}$}] at (c.240+30) {};
   \node[style = m', label = {300:$v_{5}$}] at (c.300+30) {};
   \node[style = m, label = {0:$v_{6}$}] at (c.0+30) {};
   \node at (4,1.8) [] {\scriptsize \textbf{Profile} $\theta'$};
   \node at (0,-3.5) [circle, draw, minimum size = 18mm] (c) {};
   \node[style = m, label = {60:$v_{1}$}] at (c.60+30) {};
   \node[style = m, label = {120:$v_{2}$}] at (c.120+30) {};
   \node[style = m', label = {180:$v_{3}$}] at (c.180+30) {};
   \node[style = m', label = {240:$v_{4}$}] at (c.240+30) {};
   \node[style = m', label = {300:$v_{5}$}] at (c.300+30) {\tiny $f$};
   \node[style = m', label = {0:$v_{6}$}] at (c.0+30) {};
   \node at (0,-5.3) [] {\scriptsize \textbf{Profile} $\theta''$};
   \node at (4,-3.5) [circle, draw, minimum size = 18mm] (c) {};
   \node[style = m, label = {60:$v_{1}$}] at (c.60+30) {};
   \node[style = m, label = {120:$v_{2}$}] at (c.120+30) {};
   \node[style = m', label = {180:$v_{3}$}] at (c.180+30) {};
   \node[style = m', label = {240:$v_{4}$}] at (c.240+30) {};
   \node[style = m', label = {300:$v_{5}$}] at (c.300+30) {};
   \node[style = m, label = {0:$v_{6}$}] at (c.0+30) {};
   \node at (4,-5.3) [] {\scriptsize \textbf{Profile} $\theta^{*}$};
%
  \path[
  draw, 
  line width = 1mm,
  gray!25,
  shorten > = 14mm, shorten < = 16mm]
  (0,0) edge[->] (4,0)
  (0,0) edge[->] (0,-3.5)
  (4,0) edge[->] (4,-3.5)
  (0,-3.5) edge[->] (4,-3.5);
  \end{tikzpicture}
 \caption{Type profiles used for $C_{6}$ in the proof of Theorem~\ref{thm:cycle:peaked:large}. 
On each gray vertex there is some agent, and the vertex with label $f$ must be chosen under the profile. The proof derives a contradiction on $\theta^{*}$.
 }
 \label{fig:large-cycles-6}
\end{figure}
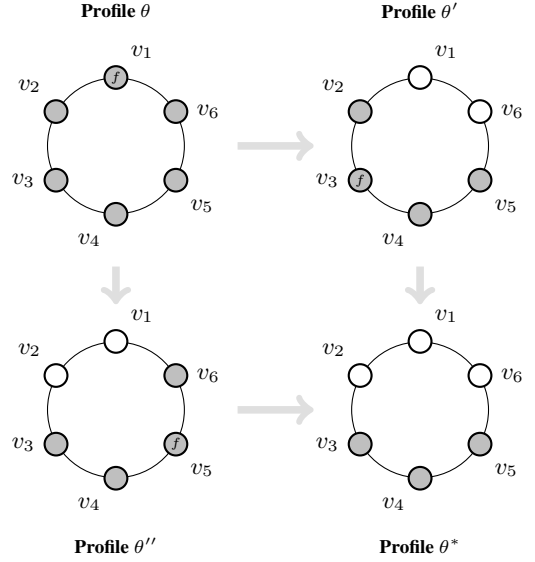

\begin{thm}
 \label{thm:cycle:peaked:large}
 Let $\Gamma$ be a cycle graph $C_{k}$ s.t.\ $k \geq 6$.
 When preferences are single-peaked, 
 there is no false-name-proof and Pareto efficient social choice function.
\end{thm}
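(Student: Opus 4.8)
The plan is to pass to social choice functions that satisfy IDB and to use that false-name-proofness is strictly stronger than truthfulness (the case $\Phi_i=\emptyset$). By the cited theorem of Okada et al.\ I may assume $f$ satisfies IDB, so that $f(\theta)$ depends only on the occupied set $I(\theta)$; I abbreviate $f(\theta)$ by $f(S)$ for $S=I(\theta)$. The engine of the whole proof is a structural lemma. Relabel the vertices so that $f(V)=v_1$ for the fully occupied profile. I claim
\[
f(S)=v_1 \quad\text{for every } S\subseteq V \text{ with } v_1\in S.
\]
This follows by iterating Lemma~\ref{lem:removal}: starting from $V$ and deleting the vertices of $V\setminus S$ one at a time, at each step the current outcome is $v_1$, the vertex $v_1$ is occupied, and $v_1$ survives the deletion, so by Lemma~\ref{lem:removal} the outcome is unchanged. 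Thus $v_1$ acts like a dictator's peak whenever some agent sits on it.

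Next comes the forcing mechanism. For any $A$ with $v_1\notin A$, an agent located at $w\in A$ may add a single false identity at $v_1$; the enlarged occupied set then contains $v_1$ and yields $v_1$, so false-name-proofness gives $d(w,f(A))\le d(w,v_1)$. Combined with $f(A)\in\text{PE}(A)$, this confines $f(A)$ to those Pareto-efficient vertices that, for each $w\in A$, are at least as close to $w$ as $v_1$ is. In particular, when $A$ contains two agents lying close to $v_1$ from both sides, these bounds intersect to $\{v_1\}$ and force $f(A)=v_1$, whereas for a short arc hugging $v_1$ on one side the bound from its $v_1$-adjacent end pins $f(A)$ to the two vertices nearest $v_1$.

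For $C_6$ this already closes the argument, and it is exactly the computation drawn in Fig.~\ref{fig:large-cycles-6}. Take $\theta$ (all six vertices), $\theta'=\{v_2,v_3,v_4,v_5\}$, $\theta''=\{v_3,v_4,v_5,v_6\}$ and $\theta^*=\{v_3,v_4,v_5\}$, with $\text{PE}(\theta')=\{v_2,v_3,v_4,v_5\}$ and $\text{PE}(\theta^*)=\{v_3,v_4,v_5\}$ (note $v_1\notin\text{PE}(\theta')$). The agent at $v_2$ adding $v_1$ gives $d(v_2,f(\theta'))\le 1$ and the agent at $v_5$ adding $v_1$ gives $d(v_5,f(\theta'))\le 2$; intersecting with $\text{PE}(\theta')$ leaves $f(\theta')=v_3$, and the same computation applied to $\theta''$ (using the ends $v_6$ at distance $1$ and $v_3$ at distance $2$) gives $f(\theta'')=v_5$. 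Since $v_3,v_5\in\theta^*$, Lemma~\ref{lem:removal} applied to $\theta'\to\theta^*$ (delete $v_2$) and to $\theta''\to\theta^*$ (delete $v_6$) forces $f(\theta^*)=v_3$ and $f(\theta^*)=v_5$ at once, the desired contradiction.

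For general $C_k$ with $k\ge 6$ I would attempt the same template: locate a three-vertex arc $T$ with $v_1\notin\text{PE}(T)$ together with its two four-vertex extensions $T\cup\{x\}$ and $T\cup\{y\}$, force their outcomes to two distinct vertices of $T$, and let Lemma~\ref{lem:removal} over-determine $f(T)$. The main obstacle — and the reason the $C_6$ computation does not transcribe verbatim — is that the one-shot bound $d(w,f(A))\le d(w,v_1)$ pins a four-arc's outcome to a single vertex only when $v_1$ is within distance two of \emph{both} of its ends, which forces the arc to wrap almost entirely around the cycle and so occurs only for the smallest cycles. For larger $k$ the attractor $v_1$ is near at most one end of any short arc, and moreover forcing an outcome to $v_1$ never contradicts Pareto efficiency (such arcs always have $v_1$ flanked, hence $v_1\in\text{PE}$), so the contradiction must be produced inside the ``free'' antipodal region where $v_1\notin\text{PE}$. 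I would therefore propagate the forcing rather than apply it in one shot: begin from the long arc missing only a short antipodal segment, whose two endpoints do lie close to $v_1$ on both sides and which is thus forced toward $v_1$, and then push constraints inward by repeated use of Lemma~\ref{lem:removal} together with the one-sided distance bounds until two incompatible values are forced on a common short arc. Making this propagation uniform in $k$, and treating the parity of $k$ at the boundary between the region pulled to $v_1$ and the free antipodal region, is the delicate step that I expect to be the crux of the argument.
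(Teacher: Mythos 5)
Your machinery up through $C_6$ is correct, and in one respect cleaner than the paper's: the structural claim that $f(S)=v_1$ for every occupied set $S$ containing $v_1$ (obtained by iterating Lemma~\ref{lem:removal} down from the full profile, under IDB) is valid, and the resulting one-shot bound $d(w,f(A))\le d(w,v_1)$ for every $w\in A$ repackages, with a single fake identity, the paper's ad hoc manipulations of adding identities at both $v_1$ and $v_6$. Your $C_6$ computation then matches the paper's essentially profile for profile ($\theta'$, $\theta''$, $\theta^{*}$, and two applications of Lemma~\ref{lem:removal}), and it is sound.

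The genuine gap is that this establishes the theorem only for $k=6$, while the statement covers every $k\ge 6$; for $k\ge 7$ you offer a propagation heuristic that you yourself flag as unresolved, and that step is not routine: already for $C_7$ it requires an idea different in kind from the one-shot bound. In the paper's argument for $C_7$, the bound narrows $f(\theta^{(1)})$ for $I(\theta^{(1)})=\{v_3,v_4,v_5,v_6\}$ only to $\{v_4,v_5\}$; the paper breaks the tie w.l.o.g.\ as $f(\theta^{(1)})=v_5$, applies Lemma~\ref{lem:removal} to get $f(\theta^{*})=v_5$ for $I(\theta^{*})=\{v_3,v_4,v_5\}$, and then turns to the shifted arc $\theta^{(2)}$ with $I(\theta^{(2)})=\{v_2,v_3,v_4,v_5\}$: there your bound (from $v_2$) excludes $v_4$ and $v_5$, while $v_3$ is excluded because deleting $v_2$ would force $f(\theta^{*})=v_3$, contradicting $f(\theta^{*})=v_5$; hence $f(\theta^{(2)})=v_2$. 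The contradiction is finally produced by a deviation pointing \emph{away} from $v_1$: the agent at $v_3$ in $\theta^{*}$ adds one identity at $v_2$ and drags the outcome from $v_5$ (cost $2$) to $v_2$ (cost $1$). This combination---symmetry breaking between tied outcomes, consistency via removal, and an ``add a vertex to pull the outcome past yourself'' manipulation inside the region where $v_1$ is not Pareto efficient---is exactly what your sketch lacks, and nothing in your propagation outline substitutes for it. Note also that the even case $k\ge 8$ does not follow from your $C_6$ template either: for the natural arcs in $C_8$ the two bounds either fail to single out a vertex, or they pin the outcome to a vertex (e.g., $v_3$ for $I=\{v_2,\dots,v_6\}$) that must itself be deleted on the way to the common sub-arc, so Lemma~\ref{lem:removal} cannot transport the value and a further argument is needed there as well.
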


\begin{figure}[t]
 \centering
  \begin{tikzpicture}[scale=1,transform shape]
   \tikzstyle{m}=[
   draw, 
   circle, 
   fill = white,
   thick, 
   minimum size = 3mm, 
   inner sep = 0pt,
   font = \scriptsize
   ];
   \tikzstyle{m'}=[
   draw, 
   circle,
   fill = gray!50,
   thick,
   minimum size = 3mm, 
   inner sep = 0pt,
   font = \scriptsize
   ];
   \node at (0,0) [circle, draw, minimum size = 18mm] (c) {};
   \node[style = m', label = {1*360/7+270/7:$v_{1}$}] at (c.1*360/7+270/7) {\tiny $f$};
   \node[style = m', label = {2*360/7+270/7:$v_{2}$}] at (c.2*360/7+270/7) {};
   \node[style = m', label = {3*360/7+270/7:$v_{3}$}] at (c.3*360/7+270/7) {};
   \node[style = m', label = {4*360/7+270/7:$v_{4}$}] at (c.4*360/7+270/7) {};
   \node[style = m', label = {5*360/7+270/7:$v_{5}$}] at (c.5*360/7+270/7) {};
   \node[style = m', label = {6*360/7+270/7:$v_{6}$}] at (c.6*360/7+270/7) {};
   \node[style = m', label = {7*360/7+270/7:$v_{7}$}] at (c.7*360/7+270/7) {};
   \node at (0,1.8) [] {\scriptsize \textbf{Profile} $\theta$};
   \node at (4,0) [circle, draw, minimum size = 18mm] (c) {};
   \node[style = m, label = {1*360/7+270/7:$v_{1}$}] at (c.1*360/7+270/7) {};
   \node[style = m, label = {2*360/7+270/7:$v_{2}$}] at (c.2*360/7+270/7) {};
   \node[style = m', label = {3*360/7+270/7:$v_{3}$}] at (c.3*360/7+270/7) {};
   \node[style = m', label = {4*360/7+270/7:$v_{4}$}] at (c.4*360/7+270/7) {};
   \node[style = m', label = {5*360/7+270/7:$v_{5}$}] at (c.5*360/7+270/7) {\tiny $f$};
   \node[style = m', label = {6*360/7+270/7:$v_{6}$}] at (c.6*360/7+270/7) {};
   \node[style = m, label = {7*360/7+270/7:$v_{7}$}] at (c.7*360/7+270/7) {};
   \node at (4,1.8) [] {\scriptsize \textbf{Profile} $\theta^{(1)}$};
   \node at (0,-3.5) [circle, draw, minimum size = 18mm] (c) {};
   \node[style = m, label = {1*360/7+270/7:$v_{1}$}] at (c.1*360/7+270/7) {};
   \node[style = m', label = {2*360/7+270/7:$v_{2}$}] at (c.2*360/7+270/7) {\tiny $f$};
   \node[style = m', label = {3*360/7+270/7:$v_{3}$}] at (c.3*360/7+270/7) {};
   \node[style = m', label = {4*360/7+270/7:$v_{4}$}] at (c.4*360/7+270/7) {};
   \node[style = m', label = {5*360/7+270/7:$v_{5}$}] at (c.5*360/7+270/7) {};
   \node[style = m, label = {6*360/7+270/7:$v_{6}$}] at (c.6*360/7+270/7) {};
   \node[style = m, label = {7*360/7+270/7:$v_{7}$}] at (c.7*360/7+270/7) {};
   \node at (0,-5.3) [] {\scriptsize \textbf{Profile} $\theta^{(2)}$};
   \node at (4,-3.5) [circle, draw, minimum size = 18mm] (c) {};
   \node[style = m, label = {1*360/7+270/7:$v_{1}$}] at (c.1*360/7+270/7) {};
   \node[style = m, label = {2*360/7+270/7:$v_{2}$}] at (c.2*360/7+270/7) {};
   \node[style = m', label = {3*360/7+270/7:$v_{3}$}] at (c.3*360/7+270/7) {};
   \node[style = m', label = {4*360/7+270/7:$v_{4}$}] at (c.4*360/7+270/7) {};
   \node[style = m', label = {5*360/7+270/7:$v_{5}$}] at (c.5*360/7+270/7) {};
   \node[style = m, label = {6*360/7+270/7:$v_{6}$}] at (c.6*360/7+270/7) {};
   \node[style = m, label = {7*360/7+270/7:$v_{7}$}] at (c.7*360/7+270/7) {};
   \node at (4,-5.3) [] {\scriptsize \textbf{Profile} $\theta^{*}$};
%
  \path[
  draw, 
  line width = 1mm,
  gray!25,
  shorten > = 16mm, shorten < = 13mm]
  (0,0) edge[->] (4,0)
  (0,0) edge[->] (0,-3.5)
  (4,0) edge[->] (4,-3.5)
  (0,-3.5) edge[->] (4,-3.5);
  \end{tikzpicture}
 \caption{Type profiles used for $C_{7}$ in the proof of Theorem~\ref{thm:cycle:peaked:large}.
 On each gray vertex there is some agent, and the vertex with label $f$ must be chosen under the profile. The proof derives a contradiction on $\theta^{*}$.
 }
 \label{fig:large-cycles-7}
\end{figure}
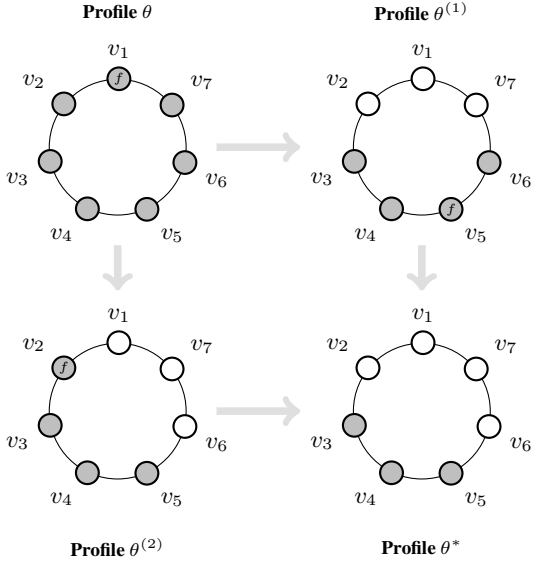

The impossibility for $C_{6}$ is important
because we will use it in the proof of Theorem~\ref{thm:cube:peaked}
in the next subsection.
Due to space limitations,
we show the proof for $C_{6}$ and $C_{7}$.

\begin{proof} 
 Assume for the sake of contradiction that
 a false-name-proof and Pareto efficient social choice function $f$ exists for 
 $C_{6}$,
 and assume w.l.o.g.\ that
 $f(\theta) = v_{1}$ for any profile $\theta$ 
 s.t.\ $I(\theta) = V$
 (see the top-left cycle of Fig.~\ref{fig:large-cycles-6}).
 
 \paragraph{For $C_{6}$:}
 Consider a type profile $\theta'$
 s.t.\ 
 \[
  I(\theta') = \{v_{2}, v_{3}, v_{4}, v_{5}\}
 \]
 (see the top-right cycle of Fig.~\ref{fig:large-cycles-6}).
 From Pareto efficienty, it must be the case that
 \[
  f(\theta') \in \text{PE}(\theta') = \{v_{2}, v_{3}, v_{4}, v_{5}\}.
 \]
 If $f(\theta') = v_{2}$, then agents at $v_{5}$ have an incentive to
 add fake identities at both $v_{1}$ and $v_{6}$.
 If $f(\theta') = v_{4}$ or $f(\theta') = v_{5}$, 
 then agents at $v_{2}$ have an incentive to
 add fake identities at both $v_{1}$ and $v_{6}$.
 Therefore, false-name-proofness implies that $f(\theta') = v_{3}$.
 Let $\theta^{*}$ then be a type profile
 s.t.\ $I(\theta^{*}) = \{v_{3}, v_{4}, v_{5}\}$,
 i.e., the antipodal to the vertex $v_{1} = f(\theta)$ 
 and its two neighbors. 
 Since $\theta^{*}$ can be obtained by removing $v_{2}$
 from $\theta'$ and $f(\theta') = v_{3} \in I(\theta^{*})$,
 Lemma~\ref{lem:removal} implies 
 \[
 f(\theta^{*}) = v_{3}.
 \]

 We then consider another profile $\theta''$ 
 s.t.\ 
 \[
  I(\theta'') = \{v_{3}, v_{4}, v_{5}, v_{6}\}
  \]
 (see the bottom-left cycle of Fig.~\ref{fig:large-cycles-6}).
 From symmetry and Lemma~\ref{lem:removal}, 
 \[
  f(\theta^{*}) = v_{5},
 \]
 which contradicts 
 $f(\theta^{*}) = v_{3}$. 
 Almost the same argument holds
 for any larger even $k$.

\paragraph{For $C_{7}$:}
 Consider a type profile $\theta^{(1)}$ 
 s.t.\ 
 \[
 I(\theta^{(1)}) = \{v_{3}, v_{4}, v_{5}, v_{6}\}
 \]
 (see the top-right cycle of Fig.~\ref{fig:large-cycles-7}).
 From Pareto efficienty, it must be the case that
 \[
  f(\theta^{(1)}) \in \text{PE}(\theta^{(1)}) = \{v_{3}, v_{4}, v_{5}, v_{6}\}.
 \]
 If $f(\theta^{(1)}) = v_{6}$, then agents at $v_{3}$ have an incentive to
 add fake identities at $v_{1}$, $v_{2}$ and $v_{7}$.
 If $f(\theta^{(1)}) = v_{3}$, then agents at $v_{6}$ have an incentive to
 add fake identities at $v_{1}$, $v_{2}$ and $v_{7}$.
 Therefore, false-name-proofness implies that
 \[
 f(\theta^{(1)}) \in \{v_{4}, v_{5}\}. 
 \]
 From the symmetry between $v_{4}$ and $v_{5}$
 in the profile $\theta^{(1)}$,
 assume w.l.o.g.\ that $f(\theta^{(1)}) = v_{5}$.
 From Lemma~\ref{lem:removal},
 removing $v_{6}$ from $\theta^{(1)}$ does not change the outcome.
 That is, for $\theta^{*}$ s.t.\ $I(\theta^{*}) = \{v_{3}, v_{4}, v_{5}\}$,
 it must be the case that 
 \[
 f(\theta^{*}) = v_{5}.
 \]

 Then let us consider another profile $\theta^{(2)}$ 
 s.t.\ 
 \[
  I(\theta^{(2)}) = \{v_{2}, v_{3}, v_{4}, v_{5}\}
 \]
 (see the bottom-left cycle of Fig.~\ref{fig:large-cycles-7}).
 Since $f$ is false-name-proof and Pareto efficient, 
 $f(\theta^{(2)}) \not \in \{v_{4}, v_{5}\}$;
 otherwise agents located on $v_{2}$ would add fake identities
 so that the outcome changes to $v_{1}$.
 Similarly, $f(\theta^{(2)}) \neq v_{3}$;
 otherwise, it must be the case that $f(\theta^{*}) = v_{3}$,
 which yields a contradiction.
 Thus, 
 \[
  f(\theta^{(2)}) = v_{2}.
 \]

 However, $f(\theta^{(2)}) = v_{2}$ implies 
 \[
  f(\theta^{*}) \neq v_{5};
 \]
 otherwise agents located on $v_{3}$ would add fake identities 
 so that the outcome changes to $v_{2}$.
 This also yields a contradiction.
 Almost the same argument holds
 for any larger odd $k$.
\end{proof}

\subsection{Single-Peaked Preferences on Hypergrids}
\label{ssec:peak:grid}

The facility location on a hypergrid graph
is a reasonable simplification of multi-criteria voting~\cite{sui:IJCAI:2013},
where each candidate has a pledge for each criteria, such as
taxation and diplomacy, that is embeddable on a hypergrid.
Each voter then has the most/least preferred point on the hypergrid.

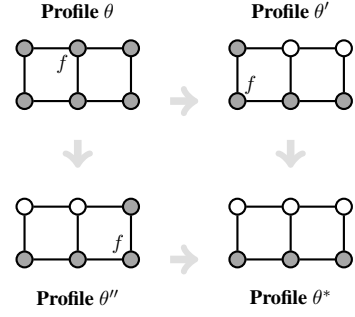
\begin{figure}[t]
 \centering
 \begin{tikzpicture}[scale=.7, transform shape, font=\large]
  \tikzstyle{m}=[
  draw, 
  circle, 
  thick, 
  minimum size = 3mm,
  inner sep = 0pt,
  font = \scriptsize
  ];
  \tikzstyle{m'}=[
  draw, 
  fill=gray!70,
  circle, 
  thick, 
  minimum size = 3mm,
  inner sep = 0pt,
  font = \scriptsize
  ];
  \node at (0,0) [style=m'] (a11) {};
  \node at (1,0) [style=m', label = {[xshift=.5mm,yshift=1mm]225:$f$}] (a12) {};
  \node at (2,0) [style=m'] (a13) {};
  \node at (0,-1) [style=m'] (a21) {};
  \node at (1,-1) [style=m'] (a22) {};
  \node at (2,-1) [style=m'] (a23) {};
  \node [above = 3mm of a12] {\textbf{Profile $\theta$}};
  \path[draw, thick]				
  (a11) edge[-] (a12)
  (a12) edge[-] (a13)
  (a21) edge[-] (a22)
  (a22) edge[-] (a23)
  (a11) edge[-] (a21)
  (a12) edge[-] (a22)
  (a13) edge[-] (a23);

  \node at (4,0) [style=m'] (b11) {};
  \node at (5,0) [style=m] (b12) {};
  \node at (6,0) [style=m] (b13) {};
  \node at (4,-1) [style=m', label = {[xshift=-1mm,yshift=-1mm]45:$f$}] (b21) {};
  \node at (5,-1) [style=m'] (b22) {};
  \node at (6,-1) [style=m'] (b23) {};
  \node [above = 3mm of b12] {\textbf{Profile $\theta'$}};
  \path[draw, thick]				
  (b11) edge[-] (b12)
  (b12) edge[-] (b13)
  (b21) edge[-] (b22)
  (b22) edge[-] (b23)
  (b11) edge[-] (b21)
  (b12) edge[-] (b22)
  (b13) edge[-] (b23);

  \node at (0,-3) [style=m] (c11) {};
  \node at (1,-3) [style=m] (c12) {};
  \node at (2,-3) [style=m'] (c13) {};
  \node at (0,-4) [style=m'] (c21) {};
  \node at (1,-4) [style=m'] (c22) {};
  \node at (2,-4) [style=m', label = {[xshift=1mm,yshift=-1mm]135:$f$}] (c23) {};
  \node [below = 3mm of c22] {\textbf{Profile $\theta''$}};
  \path[draw, thick]				
  (c11) edge[-] (c12)
  (c12) edge[-] (c13)
  (c21) edge[-] (c22)
  (c22) edge[-] (c23)
  (c11) edge[-] (c21)
  (c12) edge[-] (c22)
  (c13) edge[-] (c23);

  \node at (4,-3) [style=m] (d11) {};
  \node at (5,-3) [style=m] (d12) {};
  \node at (6,-3) [style=m] (d13) {};
  \node at (4,-4) [style=m'] (d21) {};
  \node at (5,-4) [style=m'] (d22) {};
  \node at (6,-4) [style=m'] (d23) {};
  \node [below = 3mm of d22] {\textbf{Profile $\theta^{*}$}};
  \path[draw, thick]				
  (d11) edge[-] (d12)
  (d12) edge[-] (d13)
  (d21) edge[-] (d22)
  (d22) edge[-] (d23)
  (d11) edge[-] (d21)
  (d12) edge[-] (d22)
  (d13) edge[-] (d23);

  \path[
  draw, 
  line width = 1mm,
  gray!25,
  shorten > = 4mm, shorten < = 4mm]
  (a23) edge[->] (b21)
  (b22) edge[->] (d12)
  (a22) edge[->] (c12)
  (c23) edge[->] (d21);
 \end{tikzpicture}
 \caption{Four type profiles used in the proof of Lemma~\ref{lem:3-by-2}. On each gray vertex there is some agent, and the vertex with label $f$ must be chosen under the profile. The proof derives a contradiction on $\theta^{*}$.}
 \label{fig:2-3-grid}
\end{figure}


In this section, we completely clarify
under which condition on a given hypergrid graph
a false-name-proof and Pareto efficient social choice function exists
when agents' preferences are single-peaked.

It is already known that,
when preferences are single-peaked,
a false-name-proof and Pareto efficient social choice function exists for any $2 \times m$-grid~\cite{nehama:AAMAS:2019}.
Theorem~\ref{thm:grid:peaked:large}
complements their result;
no such social choice function exists for any other 2-D grid.
Theorem~\ref{thm:cube:peaked} further shows that
this impossibility carries over into any $k$-D grid with $k \geq 3$.


\begin{lemma}
 \label{lem:3-by-2}
 Let $\Gamma$ be the $2 \times 3$-grid,
 where the set of vertices 
 $V = \{v_{1,1}, v_{1,2}, v_{1,3}, v_{2,1}, v_{2,2}, v_{2,3}\}$.
 Assume that agents' preferences are single-peaked under $\Gamma$
 and there is a false-name-proof and Pareto efficient social choice function $f$.
 Then, for any $\theta$ s.t.\  $I(\theta) = V$,
 $f(\theta)$ must be one of the four corners of $\Gamma$,
 i.e., $f(\theta) \in \{v_{1,1}, v_{1,3}, v_{2,1}, v_{2,3}\}$.
\end{lemma}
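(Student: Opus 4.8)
The plan is to argue by contradiction, assuming $f$ is false-name-proof and Pareto efficient but that $f(\theta)$ is a non-corner vertex whenever $I(\theta)=V$. Since we may assume $f$ satisfies IDB, $f$ on such profiles is a single well-defined value. The two non-corner vertices $v_{1,2}$ and $v_{2,2}$ are interchanged by the vertical reflection of the grid, a distance-preserving automorphism; composing $f$ with it preserves both properties, so without loss of generality I would take $f(\theta)=v_{1,2}$. Note that when $I(\theta)=V$ every vertex carries an agent of cost zero, so no vertex Pareto-dominates another and $\text{PE}(\theta)=V$; the assumption is thus consistent with Pareto efficiency, and the entire argument runs among profiles obtained by deleting occupied vertices, using IDB so that only occupation sets matter.

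The core step is to pin down $f$ on the profile $\theta'$ with $I(\theta')=\{v_{1,1},v_{2,1},v_{2,2},v_{2,3}\}$ (the top-left corner together with the whole bottom row) and on its mirror image $\theta''$ with $I(\theta'')=\{v_{1,3},v_{2,1},v_{2,2},v_{2,3}\}$. First I would compute $\text{PE}(\theta')$ and verify that $v_{1,2}$ and $v_{1,3}$ are Pareto-dominated (by $v_{2,1}$ and $v_{2,2}$, respectively), so that $\text{PE}(\theta')=I(\theta')$ and hence $f(\theta')\in\{v_{1,1},v_{2,1},v_{2,2},v_{2,3}\}$. Then I would invoke false-name-proofness twice: any agent in $\theta'$ can append fake ballots on the two empty vertices $v_{1,2},v_{1,3}$ to reach a profile with occupation set $V$, whose outcome is $v_{1,2}$. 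Applying this to the agent on $v_{1,1}$ forces $f(\theta')$ to lie within distance one of $v_{1,1}$, leaving $\{v_{1,1},v_{2,1}\}$; applying it to the agent on $v_{2,2}$ forces $f(\theta')$ within distance one of $v_{2,2}$, leaving $\{v_{2,1},v_{2,2},v_{2,3}\}$. The intersection is the single vertex $v_{2,1}$, so $f(\theta')=v_{2,1}$, and the symmetric computation gives $f(\theta'')=v_{2,3}$.

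Finally I would collapse both profiles onto the bottom row. Let $\theta^{*}$ satisfy $I(\theta^{*})=\{v_{2,1},v_{2,2},v_{2,3}\}$. Since $\theta^{*}=\theta'_{-v_{1,1}}$ and $f(\theta')=v_{2,1}\in I(\theta')\cap I(\theta^{*})$, Lemma~\ref{lem:removal} gives $f(\theta^{*})=v_{2,1}$; since also $\theta^{*}=\theta''_{-v_{1,3}}$ and $f(\theta'')=v_{2,3}\in I(\theta'')\cap I(\theta^{*})$, the same lemma gives $f(\theta^{*})=v_{2,3}$. As $v_{2,1}\neq v_{2,3}$, this is the desired contradiction, so $f(\theta)$ cannot equal $v_{1,2}$ and must therefore be a corner.

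I expect the delicate step to be pinning $f(\theta')$ down to a single vertex: Pareto efficiency alone only confines it to the four occupied vertices, and any single false-name manipulation excludes only some of them. The key observation is that two manipulations aimed at the same full-grid outcome $v_{1,2}$ — one by the corner agent on $v_{1,1}$ and one by the bottom-center agent on $v_{2,2}$ — carve out "acceptable" neighborhoods whose intersection is exactly one point. Verifying the Pareto-domination relations that yield $\text{PE}(\theta')=I(\theta')$ is the other routine-but-essential ingredient, since Lemma~\ref{lem:removal} can be applied only once the chosen vertex is known to survive the deletion.
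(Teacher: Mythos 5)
Your proof is correct and follows essentially the same route as the paper's: the same three profiles $\theta'$, $\theta''$, $\theta^{*}$ (top corner plus bottom row, its mirror image, and the bare bottom row), the same applications of Lemma~\ref{lem:removal}, and the same final contradiction $f(\theta^{*})=v_{2,1}$ versus $f(\theta^{*})=v_{2,3}$. The only difference is expository: you spell out, via the two false-name manipulations toward the full-occupation outcome $v_{1,2}$ (by the agents at $v_{1,1}$ and at $v_{2,2}$) intersected with $\text{PE}(\theta')$, the step that the paper compresses into the bare assertion ``Since $f$ is false-name-proof, $f(\theta')=v_{2,1}$.''
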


\begin{proof}
 Asssume w.l.o.g.\ that 
 $f(\theta) = v_{1,2}$
 (see the top-left grid in Fig.~\ref{fig:2-3-grid}).
 We construct a type profile $\theta'$ s.t.\ 
 $I(\theta') = \{v_{1,1}, v_{2,1}, v_{2,2}, v_{2,3}\}$.
 Since $f$ is false-name-proof,
 $f(\theta') = v_{2,1}$
 (see the top-right grid in Fig.~\ref{fig:2-3-grid}).
 We also construct another profile $\theta''$ s.t.\ 
 $I(\theta'') = \{v_{1,3}, v_{2,1}, v_{2,2}, v_{2,3}\}$.
 Since $f$ is false-name-proof,
 $f(\theta'') = v_{2,3}$
 (see the bottom-left grid in Fig.~\ref{fig:2-3-grid}).
 Finally, let $\theta^{*}$ be the profile constructed
 by removing all the agents located on $v_{1,1}$, $v_{1,2}$, 
 and $v_{1,3}$.
 By applying Lemma~\ref{lem:removal} 
 to those profiles,
 we obtain both $f(\theta^{*}) = v_{2,1}$ and $f(\theta^{*}) = v_{2,3}$,
 which yield a contradiction.
\end{proof}

Now we are ready to present the general impossibility results
for general 2-D grids (Theorem~\ref{thm:grid:peaked:large}) 
and hypergrids (Theorem~\ref{thm:cube:peaked}), 
which are our main contribution in this subsection. 

\begin{thm}
 \label{thm:grid:peaked:large}
 Let $\Gamma$ be an $l \times m$-grid, where $l, m \geq 3$.
 When preferences are single-peaked, 
 there is no false-name-proof and Pareto efficient social choice function.
\end{thm}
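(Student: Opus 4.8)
The plan is to reduce the general statement to the single base case of the $3 \times 3$-grid and then refute that case by combining Lemma~\ref{lem:3-by-2} with Lemma~\ref{lem:removal}. For the reduction, I would first observe that every $l \times m$-grid with $l,m \geq 3$ contains the $3 \times 3$-grid as a distance-preserving induced subgraph: take any three consecutive rows and three consecutive columns, and note that a shortest path between two vertices of this block never leaves it. By exactly the argument in the proof of Lemma~\ref{lem:dist-prsv}, the restriction of any false-name-proof and Pareto efficient rule on the larger grid to profiles supported on this block is itself false-name-proof and Pareto efficient on the $3 \times 3$-grid (the Pareto-efficient sets coincide, and a within-block manipulation transfers verbatim). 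Hence it suffices to show that no false-name-proof and Pareto efficient $f$ exists on the $3 \times 3$-grid $\Gamma$, whose vertices we write as $v_{r,c}$ for $r,c \in \{1,2,3\}$.

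For the base case, I would start from a full profile $\theta$ with $I(\theta) = V$. Since every vertex is the peak of some agent, $\text{PE}(\theta) = V$ and the chosen vertex $u := f(\theta)$ is occupied. I would then split on the position of $u$. If $u$ is the center $v_{2,2}$, I delete the top-row agents; if $u$ is one of the four degree-three vertices, then by the grid's symmetry WLOG $u = v_{1,2}$, and I delete the bottom-row agents. Each such deletion keeps $u$ occupied, so by Lemma~\ref{lem:removal} the outcome stays $u$, and the surviving profile is a full $2 \times 3$ (or $3 \times 2$) block. Lemma~\ref{lem:3-by-2}, applicable since the block is a distance-preserving induced subgraph, forces the outcome to a corner of the block, yet $u$ is not such a corner — a contradiction. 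This leaves only the case where $u$ is a corner, say $u = v_{1,1}$.

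The corner case is the crux. Let $\theta^{(2)} := \theta_{-v_{1,1}}$, so $I(\theta^{(2)}) = V \setminus \{v_{1,1}\}$. On the one hand, every remaining agent can append a single fake identity at $v_{1,1}$ to recreate $\theta$, whose outcome is $v_{1,1}$; false-name-proofness therefore requires $d(x, f(\theta^{(2)})) \leq d(x, v_{1,1})$ for every occupied $x$. Intersecting this constraint for $x = v_{1,2}$ and $x = v_{2,1}$ (each forcing $f(\theta^{(2)})$ into the closed neighborhood of that vertex) pins $f(\theta^{(2)}) \in \{v_{1,1}, v_{2,2}\}$. On the other hand, with $v_{1,1}$ now vacant, the center $v_{2,2}$ is weakly closer than $v_{1,1}$ to all eight remaining agents and strictly closer to several, so $v_{2,2}$ Pareto dominates $v_{1,1}$ and $v_{1,1} \notin \text{PE}(\theta^{(2)})$. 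Hence $f(\theta^{(2)}) = v_{2,2}$. Finally, deleting the two remaining top-row agents at $v_{1,2}$ and $v_{1,3}$ (both distinct from $v_{2,2}$) keeps the outcome at $v_{2,2}$ by Lemma~\ref{lem:removal} and lands on the full bottom block $\{v_{2,1}, v_{2,2}, v_{2,3}, v_{3,1}, v_{3,2}, v_{3,3}\}$; Lemma~\ref{lem:3-by-2} forces a corner of that block, contradicting $f = v_{2,2}$.

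I expect the main obstacle to be the corner case rather than the reduction or the non-corner elimination, both of which are routine once Lemmas~\ref{lem:removal} and~\ref{lem:3-by-2} are in hand. The delicate observation is that emptying the chosen corner simultaneously (i) lets false-name-proofness confine the new outcome to the closed neighborhood of that corner and (ii) makes the center Pareto dominate the corner, so the only survivor is $v_{2,2}$ — a non-corner that then collapses onto a $2 \times 3$ block and contradicts Lemma~\ref{lem:3-by-2}. Verifying the distance inequalities underlying (i) and (ii) is the only genuine computation, and it is short.
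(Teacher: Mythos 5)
Your proposal is correct and follows essentially the same route as the paper: reduce to the $3\times 3$-grid via the distance-preserving-subgraph argument of Lemma~\ref{lem:dist-prsv}, force the outcome of a full-support profile to a corner using Lemmas~\ref{lem:removal} and~\ref{lem:3-by-2}, and then, after vacating that corner, pin the outcome to the center $v_{2,2}$ by intersecting the false-name-proofness constraints with the Pareto-domination of the corner by the center --- exactly the paper's Lemma~\ref{lem:3-by-3}. The only deviation is the final contradiction, where you delete the remaining top-row agents and re-apply Lemma~\ref{lem:3-by-2} to the full bottom $2\times 3$ block, while the paper instead builds two further profiles ($\theta''$, $\theta'''$) and clashes them on the bottom row; this is a mild streamlining of the endgame, not a different argument.
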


\begin{proof}
 Lemma~\ref{lem:3-by-3} below shows that,
 for the $3 \times 3$-grid,
 there is no false-name-proof and Pareto efficient social choice function.
 Since any $l \times m$-grid $\Gamma$, 
 for arbitrary $l, m \geq 3$,
 contains the $3 \times 3$-grid graph as a distance-preserving
 induced subgraph, the impossiblity carries over into $\Gamma$
 according to Lemma~\ref{lem:dist-prsv}.
\end{proof}

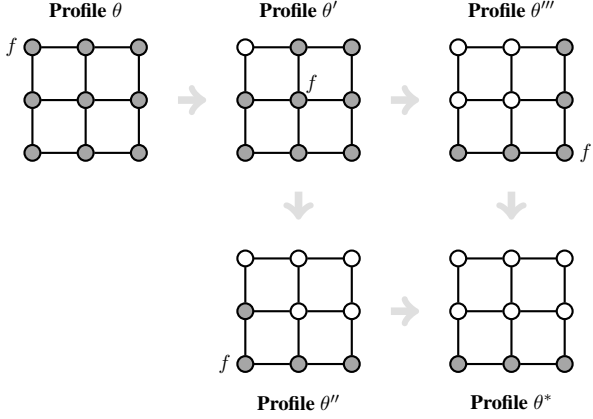
\begin{figure}[t]
 \centering
 \begin{tikzpicture}[scale=.7, transform shape, font=\large]
  \tikzstyle{m}=[
  draw, 
  circle, 
  thick, 
  font = \scriptsize
  ];
  \tikzstyle{m'}=[
  draw, 
  fill=gray!70,
  circle, 
  thick, 
  font = \scriptsize
  ];
  \node at (0,0) [style=m', label = {180:$f$}] (a11) {};
  \node at (1,0) [style=m'] (a12) {};
  \node at (2,0) [style=m'] (a13) {};
  \node at (0,-1) [style=m'] (a21) {};
  \node at (1,-1) [style=m'] (a22) {};
  \node at (2,-1) [style=m'] (a23) {};
  \node at (0,-2) [style=m'] (a31) {};
  \node at (1,-2) [style=m'] (a32) {};
  \node at (2,-2) [style=m'] (a33) {};
  \node [above = 3mm of a12] {\textbf{Profile $\theta$}};
  \path[draw, thick]				
  (a11) edge[-] (a12)
  (a12) edge[-] (a13)
  (a21) edge[-] (a22)
  (a22) edge[-] (a23)
  (a31) edge[-] (a32)
  (a32) edge[-] (a33)
  (a11) edge[-] (a21)
  (a21) edge[-] (a31)
  (a12) edge[-] (a22)
  (a22) edge[-] (a32)
  (a13) edge[-] (a23)
  (a23) edge[-] (a33);

  \node at (4,0) [style=m] (b11) {};
  \node at (5,0) [style=m'] (b12) {};
  \node at (6,0) [style=m'] (b13) {};
  \node at (4,-1) [style=m'] (b21) {};
  \node at (5,-1) [style=m', label = {[xshift=-1mm,yshift=-1mm]45:$f$}] (b22) {};
  \node at (6,-1) [style=m'] (b23) {};
  \node at (4,-2) [style=m'] (b31) {};
  \node at (5,-2) [style=m'] (b32) {};
  \node at (6,-2) [style=m'] (b33) {};
  \node [above = 3mm of b12] {\textbf{Profile $\theta'$}};
  \path[draw, thick]				
  (b11) edge[-] (b12)
  (b12) edge[-] (b13)
  (b21) edge[-] (b22)
  (b22) edge[-] (b23)
  (b31) edge[-] (b32)
  (b32) edge[-] (b33)
  (b11) edge[-] (b21)
  (b21) edge[-] (b31)
  (b12) edge[-] (b22)
  (b22) edge[-] (b32)
  (b13) edge[-] (b23)
  (b23) edge[-] (b33);

  \node at (8,0) [style=m] (c11) {};
  \node at (9,0) [style=m] (c12) {};
  \node at (10,0) [style=m'] (c13) {};
  \node at (8,-1) [style=m] (c21) {};
  \node at (9,-1) [style=m] (c22) {};
  \node at (10,-1) [style=m'] (c23) {};
  \node at (8,-2) [style=m'] (c31) {};
  \node at (9,-2) [style=m'] (c32) {};
  \node at (10,-2) [style=m', label = {0:$f$}] (c33) {};
  \node [above = 3mm of c12] {\textbf{Profile $\theta'''$}};
  \path[draw, thick]				
  (c11) edge[-] (c12)
  (c12) edge[-] (c13)
  (c21) edge[-] (c22)
  (c22) edge[-] (c23)
  (c31) edge[-] (c32)
  (c32) edge[-] (c33)
  (c11) edge[-] (c21)
  (c21) edge[-] (c31)
  (c12) edge[-] (c22)
  (c22) edge[-] (c32)
  (c13) edge[-] (c23)
  (c23) edge[-] (c33);

  \node at (4,-4) [style=m] (d11) {};
  \node at (5,-4) [style=m] (d12) {};
  \node at (6,-4) [style=m] (d13) {};
  \node at (4,-5) [style=m'] (d21) {};
  \node at (5,-5) [style=m] (d22) {};
  \node at (6,-5) [style=m] (d23) {};
  \node at (4,-6) [style=m', label = {180:$f$}] (d31) {};
  \node at (5,-6) [style=m'] (d32) {};
  \node at (6,-6) [style=m'] (d33) {};
  \node [below = 3mm of d32] {\textbf{Profile $\theta''$}};
  \path[draw, thick]				
  (d11) edge[-] (d12)
  (d12) edge[-] (d13)
  (d21) edge[-] (d22)
  (d22) edge[-] (d23)
  (d31) edge[-] (d32)
  (d32) edge[-] (d33)
  (d11) edge[-] (d21)
  (d21) edge[-] (d31)
  (d12) edge[-] (d22)
  (d22) edge[-] (d32)
  (d13) edge[-] (d23)
  (d23) edge[-] (d33);

  \node at (8,-4) [style=m] (e11) {};
  \node at (9,-4) [style=m] (e12) {};
  \node at (10,-4) [style=m] (e13) {};
  \node at (8,-5) [style=m] (e21) {};
  \node at (9,-5) [style=m] (e22) {};
  \node at (10,-5) [style=m] (e23) {};
  \node at (8,-6) [style=m'] (e31) {};
  \node at (9,-6) [style=m'] (e32) {};
  \node at (10,-6) [style=m'] (e33) {};
  \node [below = 3mm of e32] {\textbf{Profile $\theta^{*}$}};
  \path[draw, thick]				
  (e11) edge[-] (e12)
  (e12) edge[-] (e13)
  (e21) edge[-] (e22)
  (e22) edge[-] (e23)
  (e31) edge[-] (e32)
  (e32) edge[-] (e33)
  (e11) edge[-] (e21)
  (e21) edge[-] (e31)
  (e12) edge[-] (e22)
  (e22) edge[-] (e32)
  (e13) edge[-] (e23)
  (e23) edge[-] (e33);

  \path[
  draw, 
  line width = 1mm,
  gray!25,
  shorten > = 4mm, shorten < = 4mm]
  (a23) edge[->] (b21)
  (b23) edge[->] (c21)
  (d23) edge[->] (e21)
  (b32) edge[->] (d12)
  (c32) edge[->] (e12);
 \end{tikzpicture}
 \caption{Five type profiles used in the proof of Lemma~\ref{lem:3-by-3}. On each gray vertex there is some agent, and the vertex with label $f$ must be chosen under the profile. The proof derives a contradiction on $\theta^{*}$.}
 \label{fig:grid}
\end{figure}

\begin{lemma}
 \label{lem:3-by-3}
 Let $\Gamma$ be the 2-D $3 \times 3$-grid.
 When preferences are single-peaked, 
 there is no false-name-proof and Pareto efficient social choice function.
\end{lemma}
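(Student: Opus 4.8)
The plan is to assume, for contradiction, that a false-name-proof and Pareto efficient (hence, by the theorem of Okada et al., IDB) social choice function $f$ exists on the $3\times 3$-grid, and then to track how the forced outcome moves as we shrink the set of occupied vertices. I start from the all-occupied profile $\theta$ with $I(\theta)=V$. Because every occupied vertex is the unique top choice of its own agent, no vertex Pareto-dominates another, so $\text{PE}(\theta)=V$ and $f(\theta)$ could a priori be any vertex. Using the dihedral symmetry of the grid, I reduce the position of $f(\theta)$ to three representatives: a corner, an edge-midpoint, and the center. The corner case $f(\theta)=v_{1,1}$ is the one depicted in Fig.~\ref{fig:grid}, and I carry it out in full.

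The workhorse is a ``delete-the-winner'' step driven by IDB and false-name-proofness. Suppose on some occupied set the forced outcome is an occupied vertex $x$. I delete $x$ from the occupied set; re-inserting a single fake identity at $x$ restores a profile with the previous occupied set, whose outcome is $x$ by IDB, so false-name-proofness forbids any remaining agent from strictly preferring $x$ to the new outcome. Equivalently, the new outcome $f$ must be weakly closer than $x$ to every remaining occupied vertex, i.e.\ $d(u,f)\le d(u,x)$ for all remaining occupied $u$. I choose which vertices to delete so that, among the Pareto-efficient alternatives, exactly one vertex satisfies all of these inequalities, pinning the new outcome uniquely. Between such pinning steps I invoke Lemma~\ref{lem:removal}, which keeps the outcome fixed whenever a non-chosen occupied vertex is deleted, to march the profile without changing the winner.

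Concretely, from $f(\theta)=v_{1,1}$ I first delete $v_{1,1}$: the neighbors $v_{1,2}$ and $v_{2,1}$ each force the new winner within distance one of themselves, and the only common Pareto-efficient candidate is the center, so $f=v_{2,2}$ on $I=V\setminus\{v_{1,1}\}$. From here the argument branches. Along one branch I strip $v_{1,2},v_{2,1}$ (outcome unchanged by Lemma~\ref{lem:removal}), delete the winner $v_{2,2}$, and let $v_{2,3},v_{3,2}$ pin the new winner to $v_{3,3}$; removing the non-chosen $v_{1,3},v_{2,3}$ then yields $f(\theta^{*})=v_{3,3}$ on the bottom row $I=\{v_{3,1},v_{3,2},v_{3,3}\}$. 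Along the other branch I instead strip $v_{1,2},v_{1,3},v_{2,3}$, delete $v_{2,2}$, and let $v_{2,1},v_{3,2}$ pin the winner to $v_{3,1}$; removing the non-chosen $v_{2,1}$ then yields $f(\theta^{*})=v_{3,1}$ on the very same bottom-row profile $\theta^{*}$. Since $v_{3,3}\neq v_{3,1}$, the two branches disagree on $f(\theta^{*})$, the desired contradiction.

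The main obstacle is twofold. First, each pinning step must leave exactly one admissible winner, so the deletion sets have to be engineered so that two distance-one constraints intersect in a single Pareto-efficient vertex; verifying this uniqueness, and that the forced vertex is indeed Pareto-efficient, is the delicate bookkeeping. Second, and more seriously, the edge-midpoint and center orbits do not collapse under a single deletion the way the corner does: deleting the chosen vertex there leaves it Pareto-efficient rather than dominated, so the clean two-neighbor pin fails and these cases require their own, separately constructed deletion chains leading to the same style of bottom-row contradiction.
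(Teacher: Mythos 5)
Your corner case is correct, and it is essentially the paper's own argument: the chain you build ($V\setminus\{v_{1,1}\}$ pinned to $v_{2,2}$; then $\{v_{2,1},v_{3,1},v_{3,2},v_{3,3}\}$ pinned to $v_{3,1}$ and $\{v_{1,3},v_{2,3},v_{3,1},v_{3,2},v_{3,3}\}$ pinned to $v_{3,3}$; both collapsed onto the bottom row by Lemma~\ref{lem:removal}) consists of exactly the profiles of Fig.~\ref{fig:grid}, and your IDB-plus-false-name-proofness pinning steps are sound (in each pin the surviving competitor, e.g.\ $v_{2,2}$, is Pareto dominated in the reduced profile, so uniqueness does hold). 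The genuine gap is that this is only one of your three symmetry classes. A proof of the lemma must derive a contradiction for \emph{every} possible value of $f(\theta)$ on the all-occupied profile, and you derive one only when $f(\theta)$ is a corner; for the center and the edge-midpoints you state that they ``require their own, separately constructed deletion chains'' without constructing them. As you yourself observe, the obstacle there is real: after deleting $v_{2,2}$ (or $v_{1,2}$) from the all-occupied profile, the deleted vertex remains Pareto efficient, so false-name-proofness permits the outcome to stay put and the two-neighbor pin forces nothing. Naming the obstacle does not dispose of the cases, so the lemma is not established by your proposal.

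The paper eliminates the non-corner cases with Lemma~\ref{lem:3-by-2}, which you never invoke even though it is available. Suppose $f(\theta)\in\{v_{1,2},v_{2,2}\}$ (the two non-corner representatives after your symmetry reduction). Delete the bottom-row vertices one at a time: Lemma~\ref{lem:removal} applies at every step because the outcome is occupied and remains occupied, so the outcome is unchanged, and one reaches the all-occupied profile of the top $2\times 3$ subgrid whose outcome is a non-corner of that subgrid. This subgrid is distance-preserving, and for profiles supported on it the Pareto-efficient sets computed in the $3\times 3$-grid and in the $2\times 3$-grid coincide (each bottom-row vertex is Pareto dominated by the vertex directly above it), so the restriction of $f$ to such profiles is a false-name-proof and Pareto efficient social choice function on the $2\times 3$-grid; Lemma~\ref{lem:3-by-2} then forces its all-occupied outcome to be a corner of that subgrid, a contradiction. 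With this reduction in hand, your corner chain completes the proof; without it (or some substitute handling the center and edge-midpoint cases), the argument is incomplete.
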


\begin{proof}
 Assume that 
 a false-name-proof and Pareto efficient social choice function $f$ exists
 for the $3 \times 3$-grid.
 From Lemmata~\ref{lem:removal} and \ref{lem:3-by-2},
 for any $\theta$ s.t.\ $I(\theta) = V$,
 $f(\theta) \in \{(1,1), (1,3), (3,1), (3,3)\}$ holds.
 From symmetry, assume w.l.o.g.\ that 
 $f(\theta) = (1,1)$ (see the top-left grid in Fig.~\ref{fig:grid}).

 We now remove all the agents located at $(1,1)$
 from the above profile $\theta$, and refer to the profile
 as $\theta'$.
 Since $f$ is false-name-proof and Pareto efficient,
 $f(\theta') = (2,2)$. 
 Here, let $\theta''$ be the profile that
 further removes all the agents located at 
 $(1,2)$, $(1,3)$, $(2,2)$, and $(2,3)$ from $\theta'$.
 Note that $I(\theta'') = \{(2,1), (3,1), (3,2), (3,3)\}$,
 and thus $f(\theta'') = (3,1)$ holds by the same argument.
 We also consider another profile, $\theta'''$,
 which is obtained by removing all the agents at
 $(1,2)$, $(2,1)$, and $(2,2)$ from $\theta'$.
 Note that $I(\theta''') = \{(1,3), (2,3), (3,1), (3,2), (3,3)\}$,
 and $f(\theta''') = (3,3)$ by the same argument.

 Then we construct $\theta^{*}$
 by removing all the agents in the vertices 
 except for $(3,1)$, $(3,2)$, and $(3,3)$ from $\theta'$.
 Since $\theta^{*}$ is reachable from both $\theta''$
 and $\theta'''$, Lemma~\ref{lem:removal} implies
 $f(\theta^{*}) = (3,1)$ and $f(\theta^{*}) = (3,3)$,
 which yields a contradiction.
\end{proof}
  

\begin{thm}
 \label{thm:cube:peaked}
 Let $\Gamma$ be an arbitrary $k (> 2)$-D grid.
 When preferences are single-peaked, 
 there is no false-name-proof and Pareto efficient social choice function.
\end{thm}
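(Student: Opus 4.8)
The plan is to reduce the $k$-dimensional case to the six-cycle through a chain of distance-preserving induced subgraphs, so that the impossibility for $C_6$ (Theorem~\ref{thm:cycle:peaked:large}) can be transported by Lemma~\ref{lem:dist-prsv}. The crucial observation is that a $3\times3$-grid does \emph{not} embed in the smallest three-dimensional grid $2\times2\times2$, so Lemma~\ref{lem:3-by-3} cannot be reused directly; the six-cycle, however, does embed isometrically into the cube, which is precisely why the $C_6$ impossibility was singled out.

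First I would exhibit $C_6$ as a distance-preserving induced subgraph of the cube $2\times2\times2$. Labeling the cube's vertices by bit-strings, the hexagon $(0,0,0),(1,0,0),(1,1,0),(1,1,1),(0,1,1),(0,0,1)$ is a $6$-cycle, since consecutive vertices differ in exactly one coordinate. It is induced, because the only omitted vertices are $(0,1,0)$ and $(1,0,1)$ and every pair of non-consecutive hexagon vertices differs in at least two coordinates, hence is a non-edge. It is distance-preserving because the grid ($L_1$) distance between two of these vertices equals their Hamming distance, which matches their distance along the cycle: the three antipodal pairs lie at distance $3$ in both graphs, and the distance-$2$ pairs likewise agree.

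Second I would show that $2\times2\times2$ is a distance-preserving induced subgraph of every $k$-D grid with $k\geq3$. Writing the grid as a Cartesian product of paths, I restrict the first three factors to two adjacent vertices each (possible since each path has at least two vertices) and fix an arbitrary vertex in every remaining factor. The induced subgraph on this box is isomorphic to the cube, and since grid distance is the sum of per-coordinate distances, a monotone shortest path stays inside the box, so the embedding is isometric and induced. Because the relation "distance-preserving induced subgraph" is transitive (induced-ness composes, and equal distances compose), chaining the two embeddings yields $C_6$ as a distance-preserving induced subgraph of every $k$-D grid with $k\geq3$. Theorem~\ref{thm:cycle:peaked:large} forbids a false-name-proof and Pareto efficient rule on $C_6$, and Lemma~\ref{lem:dist-prsv} then carries this impossibility over to the whole grid.

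The main obstacle is a subtlety about Lemma~\ref{lem:dist-prsv}: it is phrased for truthfulness, whereas the $C_6$ result is a false-name-proofness statement, and "no false-name-proof rule" is weaker than "no truthful rule," so the transfer is not a black-box application. I would handle it exactly as in the proof of Theorem~\ref{thm:grid:peaked:large}, by rerunning the restriction argument of Lemma~\ref{lem:dist-prsv} at the false-name level. Since the embedding is distance-preserving, for every profile supported on the embedded copy of $C_6$ the Pareto-efficient set coincides with that of the corresponding $C_6$-profile, so a Pareto-efficient grid rule keeps its outcome inside the copy and hence induces a rule on $C_6$. Every fake-identity manipulation used in the $C_6$ impossibility proof places its fake reports on $C_6$'s own vertices, which correspond to genuine grid vertices, so the manipulation lifts verbatim and contradicts false-name-proofness of the grid rule. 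I would state this reduction explicitly rather than leave it implicit in the citation.
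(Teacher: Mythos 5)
Your proposal is correct and follows essentially the same route as the paper: exhibit $C_6$ as a distance-preserving induced subgraph of the $2\times2\times2$ cube, embed the cube isometrically in any $k$-D grid with $k>2$, and transfer the $C_6$ impossibility of Theorem~\ref{thm:cycle:peaked:large} via Lemma~\ref{lem:dist-prsv}. The only difference is one of rigor, in your favor: you spell out the explicit hexagon, the transitivity of the embedding relation, and the fact that Lemma~\ref{lem:dist-prsv} (stated for truthfulness) must be rerun at the false-name level, a point the paper applies implicitly.
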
 

\begin{proof}
 We can easily observe that 
 the three-dimensional 
 $2 \times 2 \times 2$-grid, a.k.a.\ the binary cube,
 contains $C_{6}$ 
 as a distance-preserving induced subgraph, e.g., the subgraph induced from the grayed vertices in Fig.~\ref{fig:cube}.
 As we showed in Theorem~\ref{thm:cycle:peaked:large} in the previous subsection, 
 there is no false-name-proof and Pareto efficient social choice function for $C_{6}$.
 Therefore, by Lemma~\ref{lem:dist-prsv}, 
 no such social choice function exists 
 for the $2 \times 2 \times 2$-grid.
 Any other larger grid (possibly of more than three dimensions)
 contains the three-dimensional $2 \times 2 \times 2$-grid,
 and thus the impossibility 
 carries over by Lemma~\ref{lem:dist-prsv}.
\end{proof}

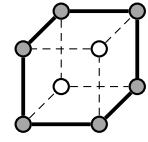
\begin{figure}[t]
 \centering
 \begin{tikzpicture}[scale=1, transform shape, font = \large]
  \tikzstyle{m}=[
  draw, 
  circle, 
  thick,
  minimum size = 2mm, inner sep=0pt,
  font = \scriptsize
  ];
  \tikzstyle{m'}=[
  draw, 
  circle, 
  fill=gray!70,
  thick, 
  minimum size = 2mm, inner sep=0pt,
  font = \scriptsize
  ];
  \node at (0, 0) [style=m'] (a1) {};
  \node at (0, -1) [style=m'] (a2) {};
  \node at (1, -1) [style=m'] (a3) {};
  \node at (1, 0) [style=m] (a4) {};
  \node at (.5, 0.5) [style=m'] (b1) {};
  \node at (.5, -.5) [style=m] (b2) {};
  \node at (1.5, -.5) [style=m'] (b3) {};
  \node at (1.5, .5) [style=m'] (b4) {};
  \path[draw, line width = .5mm, 
  ]		
  (a1) edge[-] (a2)
  (a2) edge[-] (a3)
  (a3) edge[-] (b3)
  (b3) edge[-] (b4)
  (b4) edge[-] (b1)
  (b1) edge[-] (a1);
  \path[draw, thin, densely dashed]		
  (a1) edge[-] (a2)
  (a2) edge[-] (a3)
  (a3) edge[-] (a4)
  (a4) edge[-] (a1)
  (b1) edge[-] (b2)
  (b2) edge[-] (b3)
  (b3) edge[-] (b4)
  (b4) edge[-] (b1)
  (a1) edge[-] (b1)
  (a2) edge[-] (b2)
  (a3) edge[-] (b3)
  (a4) edge[-] (b4);
 \end{tikzpicture}
 \caption{The three-dimensional $2 \times 2 \times 2$-grid, also known as a binary cube, 
 contains $C_{6}$ (emphasized by bold edges) as a distance-preserving induced subgraph, 
 which consists of the grayed vertices.}
 \label{fig:cube}
\end{figure}

\section{SINGLE-DIPPED PREFERENCES}
\label{sec:dip}

As we already mentioned in Section~\ref{sec:related}, 
this paper is the very first work that considers false-name-proof
social choice function for discrete facility location problem
when agents' preferences are single-dipped.
We therefore begin with the discussion on tree graphs.

\subsection{Single-Dipped Preferences on Trees}
\label{ssec:dip:tree}

For the case of a public bad, 
where agents' preferences are single-dipped,
we can find a false-name-proof and Pareto efficient social choice function.


\begin{thm}
 \label{thm:tree:dipped}
 Let $\Gamma$ be an arbitrary tree graph.
 When preferences are single-dipped, 
 there is a false-name-proof and Pareto efficient social choice function.
\end{thm}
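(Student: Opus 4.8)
The plan is to exhibit one explicit rule and verify the two properties directly. By the theorem of Okada et al.~\cite{okada:PRIMA:2019} I may restrict attention to social choice functions that ignore duplicate ballots; in fact $\mathrm{PE}(\theta)$ depends only on the occupied set $I(\theta)$ (agents sharing a vertex have identical preferences), so any Pareto-efficient selection from $\mathrm{PE}$ is automatically IDB and anonymous. First I fix an arbitrary leaf $r$ and root the tree at $r$. The proposed function $f$ outputs the Pareto-efficient vertex closest to $r$, breaking ties by a fixed linear order $\prec$ refining distance-to-$r$ (equivalently, a sequential Pareto rule whose order is a breadth-first enumeration from $r$). Since $\mathrm{PE}(\theta)\neq\emptyset$ always (any maximizer of $\sum_{i}d(\cdot,\theta_{i})$ is undominated) and $f(\theta)\in\mathrm{PE}(\theta)$ by construction, Pareto efficiency is immediate; the entire content is false-name-proofness.

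For false-name-proofness I would exploit its connection with population monotonicity~\cite{bu:EL:2013}. Writing $f$ as a function $g$ of the occupied set, consider an agent truly located at $x$ whose honest co-reporters occupy $J$. Every manipulation (a false location together with fake ballots) yields some occupied set $S\supseteq J$, whereas the honest report yields $J\cup\{x\}$, so false-name-proofness is equivalent to
\[
 d\bigl(g(J\cup\{x\}),x\bigr)\ \ge\ d\bigl(g(S),x\bigr)\qquad\text{for all }S\supseteq J .
\]
I would derive this from two monotonicity claims: a \emph{participation} claim, $d(g(T\cup\{x\}),x)\ge d(g(T),x)$ for every $T$ (truthfully adding yourself can only push the facility away from you), and an \emph{incumbent-monotonicity} claim, $d(g(A),x)\ge d(g(B),x)$ whenever $x\in A\subseteq B$ (once you are present, further ballots only move the facility back toward you). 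For any $S\supseteq J$ these chain as $d(g(S),x)\le d(g(S\cup\{x\}),x)\le d(g(J\cup\{x\}),x)$, which is exactly the displayed inequality.

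Proving the two claims is where the real work lies, and I expect it to be the main obstacle. The tempting shortcut—arguing that enlarging the occupied set enlarges $\mathrm{PE}$ and thus drags the $r$-nearest Pareto-efficient vertex monotonically toward $r$—fails, because $\mathrm{PE}$ is \emph{not} monotone under adding agents: seating an agent on a vertex $v$ can supply the strict inequality that upgrades a previously non-strict (hence invalid) domination of $v$ into a genuine one, so $v$ may leave $\mathrm{PE}$. I would therefore establish the two claims by a structural induction that roots the tree at $r$ and tracks, branch by branch, how the undominated set and its $r$-nearest element respond to inserting or deleting the agent at $x$, using the tree characterization of domination ($w$ dominates $v$ precisely when every agent's projection onto the $v$--$w$ path lies in the half closer to $v$, strictly for at least one). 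The delicate cases are exactly those where several agents are equidistant from the current outcome, so that non-strict dominations must be treated explicitly; these are the configurations in which the tie-breaking order $\prec$ becomes essential, and in which a poorly chosen order would break false-name-proofness just as a badly chosen sequential order does on cycles.
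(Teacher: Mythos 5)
Your reduction is valid as far as it goes: for an IDB rule, false-name-proofness is indeed equivalent to your displayed inequality, and that inequality does follow from your participation and incumbent-monotonicity claims. But the proposal stops exactly where the content of the theorem lies: neither claim is proved, only a plan for a structural induction is announced. Worse, the claims are in fact \emph{false} for the rule as you have defined it, so no execution of that plan can close the gap without first changing the rule. Take the spider with center $c$ and three legs $r$--$c$, $c$--$b_1$--$b_2$--$b_3$, and $c$--$w_1$--$w_2$--$w_3$, and root your rule at the leaf $r$ (your construction allows any leaf; note that $r$ is not an endpoint of any longest path). With truthful agents at $b_3$ and $w_3$, the Pareto-efficient set is $\{r, b_2, w_2, b_3, w_3\}$ (here $c$, $b_1$, $w_1$ are Pareto dominated by $r$), so your rule outputs $r$, at distance $4$ from $b_3$. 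If the agent at $b_3$ instead reports $b_1$ and adds one false identity at $w_1$, the occupied set becomes $\{b_1, w_1, w_3\}$; now $b_3$ dominates $r$ (it is weakly farther from all three ballots, strictly from $w_1$ and $w_3$), the Pareto set becomes $\{w_2, b_3, w_3\}$, and its unique element closest to $r$ is $w_2$, at distance $5$ from $b_3$. The manipulation strictly helps, and it does so for \emph{every} tie-breaking order refining distance to $r$, so the rule is not false-name-proof; correspondingly, your participation claim fails, since re-adding the truthful ballot at $b_3$ to $\{b_1, w_1, w_3\}$ moves the outcome from distance $5$ back to distance $4$. Your closing caveat that a poorly chosen order could break false-name-proofness is therefore not a loose end to be tidied later: as stated, the construction is wrong for admissible choices of $r$ and $\prec$, and no working choice is identified, let alone verified.

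The paper avoids this entire difficulty with a structurally different rule. It fixes a \emph{longest} path of the tree, with extremes $a$ and $b$, and uses the tree fact that every vertex's farthest vertices include $a$ or $b$; hence every agent's most-preferred set contains $a$ or $b$. The rule is then a unanimous two-alternative vote: output $a$ if some agent strictly prefers $a$ to $b$, and $b$ otherwise. Because the range consists of only two fixed alternatives and adding ballots can only move the outcome from $b$ to $a$, false-name-proofness is immediate, and Pareto efficiency is argued from the observation that the winner is a most-preferred alternative of at least one agent (of all agents, when $b$ wins). This anchoring at longest-path extremes is precisely the idea your proposal is missing: an arbitrary leaf has no such extremal property, and that is exactly what the counterexample above exploits.
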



\begin{proof}
 Consider the social choice function described as follows.
 First, choose an arbitrary longest path $\pi^{*}$ 
 of a given tree, whose extremes are called $a$ and $b$.
 Then, return $a$ as an outcome 
 if at least one agent strictly prefers $a$ to $b$;
 otherwise return $b$ as an outcome.

 For each agent $i$,
 either $a$ or $b$ is one of the most preferred alternative; 
 otherwise, the path from the most preferred point of $i$
 to one of the two extremes is strictly longer than $\pi^{*}$,
 which violates the assumption that $\pi^{*}$ is a longest path.
 In Fig.~\ref{fig:tree}, the agents at the bottom left gray vertex
 most prefer $b$,
 while agents at the middle or top-right gray vertices
 most prefer $a$.
 It is therefore obvious that the above social choice function is Pareto efficient,
 since either $a$ or $b$ is the most preferred alternative for each agent,
 and the choice between $a$ and $b$ is made by a unanimous voting,
 guaranteeing that the chosen alternative is the most preferred 
 for at least one agent.
 Furthermore, such a unanimous voting over two alternatives is 
 obviously false-name-proof. 
%
\end{proof}

\begin{figure}[t]
 \centering
 \begin{tikzpicture}[scale=1, transform shape, font = \large]
  \tikzstyle{m}=[
  draw, 
  circle, 
  thick, 
  font = \scriptsize
  ];
  \tikzstyle{m'}=[
  draw, 
  circle, 
  fill=gray!70,
  thick, 
  font = \scriptsize
  ];
  \node at (0, 0) [style=m, label = {90:$a$}] (1) {};
  \node at (1, 0) [style=m] (2) {};
  \node at (1.75, -0.75) [style=m] (3) {};
  \node at (1, -1.5) [style=m'] (4) {};
  \node at (2.75, -0.75) [style=m'] (5) {};
  \node at (3.5, 0) [style=m'] (6) {};
  \node at (3.5, -1.5) [style=m] (7) {};
  \node at (4.5, -1.5) [style=m, label = {90:$b$}] (8) {};
  \path[draw, line width = 1mm, 
  ]		
  (1) edge[-] (2)
  (2) edge[-] (3)
  (3) edge[-] (5)
  (5) edge[-] (7)
  (7) edge[-] (8);
  \path[draw, thin, densely dashed]		
  (1) edge[-] (2)
  (2) edge[-] (3)
  (3) edge[-] (4)
  (3) edge[-] (5)
  (5) edge[-] (6)
  (5) edge[-] (7)
  (7) edge[-] (8);
 \end{tikzpicture}
 \caption{The bold edges construct a longest path $\pi^*$. For the profile where some agent exists on each of the gray vertices, 
 the social choice function 
 described in the proof of Theorem 2 returns $a$ as an outcome.}
 \label{fig:tree}
\end{figure}
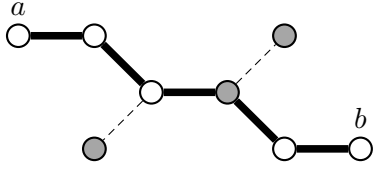

\subsection{Single-Dipped Preferences on Cycles}
\label{ssec:dip:cycle}

We next consider locating a public bad on a cycle.
Single-dipped preferences quite resemble single-peaked preferences
for cycle graphs, especially for sufficiently large ones.
Actually, in this subsection we provide 
almost the same results with the case of single-peaked preferences.

\begin{thm}
 \label{thm:cycle:dipped:small}
 Let $\Gamma$ be a cycle graph $C_{k}$ s.t.\  $3 \leq k \leq 5$.
 When preferences are single-dipped, 
 there is a false-name-proof and Pareto efficient social choice function.
\end{thm}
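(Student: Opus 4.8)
The plan is to exhibit, for each $k \in \{3,4,5\}$, a sequential Pareto rule (as defined above) and to prove it false-name-proof; Pareto efficiency is then automatic. Throughout I would first invoke the IDB reduction of Okada et al.~\cite{okada:PRIMA:2019} to restrict attention to rules that depend only on the occupied set $I(\theta)$, writing $f(U)$ for the outcome on occupied set $U$. Under this reduction false-name-proofness for single-dipped preferences takes a convenient form: since an agent truly located at $v$ can, by relabeling its own identity and adding fakes, realize any occupied set $S$ that contains the other agents' vertices $U \setminus \{v\}$, the rule is false-name-proof exactly when, for every $U$, every $v \in U$, and every $S \supseteq U \setminus \{v\}$, one has $d(v, f(U)) \ge d(v, f(S))$. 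In words, no agent can pad the occupied set to push the facility strictly farther from its own vertex.

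For $C_{3}$ the verification is immediate: the diameter is $1$, and a direct check of the three shapes of occupied sets shows that the truthful outcome already sits at distance $1$ from every present agent, except in the fully occupied profile, where no padding changes the outcome; hence any ordering works. For $C_{4}$ I would avoid case analysis entirely by exploiting an antipodal symmetry. The antipode map $\alpha(v_{i}) = v_{i+2}$ is a distance-preserving automorphism satisfying $d(v,w) + d(\alpha(v), w) = 2$ (the diameter) for all $w$, so the single-dipped preference at $v$ coincides with the single-peaked preference at $\alpha(v)$. Consequently, if $f^{p}$ denotes the single-peaked rule of Theorem~\ref{thm:cycle:peaked:small} on $C_{4}$, the rule $f^{d}(\theta) := f^{p}\big((\alpha(\theta_{i}))_{i}\big)$ inherits both Pareto efficiency and false-name-proofness, because $\alpha$ is a bijection that carries single-dipped manipulations, fake identities included, bijectively to single-peaked ones and preserves the induced rankings.

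$C_{5}$ is the genuine obstacle, as it admits no such antipodal collapse to the single-peaked case (every single-dipped agent has two most-preferred alternatives, so no relabeling makes the domain single-peaked). Here I would again use a sequential Pareto rule, but its ordering $\sigma$ must be chosen with care. The key structural fact, obtained by cataloguing occupied sets up to rotation and reflection, is that the Pareto-efficient set is small only for singletons, adjacent pairs, distance-$2$ pairs, and consecutive triples; for every occupied set consisting of an adjacent pair together with a nonadjacent third vertex, and for every larger set, the Pareto-efficient set is all of $V$. Thus the $\sigma$-first vertex is reachable by padding from almost any profile, and the danger is that an agent lying at distance $2$ from that vertex forces it while its truthful outcome is nearer. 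Blocking this pins down necessary constraints on $\sigma$: taking $v_{1}$ first, the truthful outcomes on $\{v_{1},v_{3}\}$ and $\{v_{1},v_{4}\}$ (whose Pareto sets are $\{v_{4},v_{5}\}$ and $\{v_{2},v_{3}\}$) must already lie at maximal distance from $v_{3}$ and $v_{4}$, forcing $v_{5}$ before $v_{4}$ and $v_{2}$ before $v_{3}$; these are met by the candidate $\sigma = v_{1} \to v_{2} \to v_{5} \to v_{3} \to v_{4}$. The main work, and the step I expect to be hardest, is the exhaustive confirmation that this $\sigma$ is false-name-proof: one computes each Pareto-efficient set and checks the displayed inequality for every present agent and every reachable superset, the critical cases being exactly those in which padding creates an all-$V$ Pareto set.
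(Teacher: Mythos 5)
Your proposal is correct and takes essentially the same route as the paper: the paper's proof also treats $C_{3}$ as trivial, handles $C_{4}$ by identifying the single-dipped domain with the single-peaked one via the antipodal map and reusing the rule $v_{1} \to v_{3} \to v_{2} \to v_{4}$ of Theorem~\ref{thm:cycle:peaked:small}, and for $C_{5}$ uses exactly your sequential Pareto rule with ordering $v_{1} \to v_{2} \to v_{5} \to v_{3} \to v_{4}$. In fact your outline (the IDB reformulation of false-name-proofness, the catalogue of Pareto-efficient sets, and the derivation of the ordering constraints) is more detailed than the paper's own proof, which merely asserts that the $C_{5}$ rule is false-name-proof; the finite case check you defer as ``the main work'' does go through for this ordering.
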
 

\begin{proof}
 For $C_{3}$, it is easy to see that
 any seuqential Pareto rule is false-name-proof.
 For $C_{4}$, 
 the domain of single-dipped preferences
 coincides with the domain of single-peaked preferences,
 since the point diagonal from a dip point
 can be considered as a peak point.
 Therefore, the sequential Pareto rule with 
 ordering $v_{1} \rightarrow v_{3} \rightarrow v_{2} \rightarrow v_{4}$
 is false-name-proof, as shown in Theorem~\ref{thm:cycle:peaked:small}.
 Finally, for $C_{5}$, the sequential Pareto rule with 
 ordering $v_{1} \rightarrow v_{2} \rightarrow v_{5} \rightarrow v_{3} \rightarrow v_{4}$
 is false-name-proof.
\end{proof}

\begin{thm}
 \label{thm:cycle:dipped:large}
 Let $\Gamma$ be a cycle graph $C_{k}$ s.t.\ $k \geq 6$.
 When preferences are single-dipped, 
 there is no false-name-proof and Pareto efficient social choice function.
\end{thm}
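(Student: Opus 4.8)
The plan is to mirror the structure of the single-peaked impossibility proof for cycles (Theorem~\ref{thm:cycle:peaked:large}), exploiting the observation, already noted in the single-dipped section, that on sufficiently large cycles single-dipped and single-peaked preferences behave very similarly with respect to the set of Pareto efficient alternatives and the manipulation incentives. First I would establish the single-dipped analogue of Lemma~\ref{lem:removal}: if $f$ is false-name-proof, $f(\theta)\notin I(\theta)$ (the dip structure makes occupied vertices typically \emph{non}-preferred), and removing an agent at some $v$ keeps the chosen alternative Pareto efficient, then the outcome should not move. The subtlety is that for single-dipped preferences an agent located at $v$ wants the facility \emph{far} from $v$, so the false-name manipulation argument runs in the opposite direction: an agent benefits by adding fake identities to \emph{preserve} an outcome that is far from her true dip, or to push the outcome away. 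I would therefore first record the precise removal/monotonicity lemma appropriate to the single-dipped domain before chaining profiles.

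Next I would concentrate, as the authors did, on the two base cases $C_6$ and $C_7$, since the argument for general even and odd $k\ge 6$ follows by the same pattern. The core step is to characterize $\text{PE}(\theta)$ under single-dipped preferences for the relevant profiles. Crucially, when agents occupy a consecutive arc of the cycle, the Pareto efficient alternatives are the vertices that are \emph{farthest} from the occupants, i.e., the arc antipodal to the occupied region, rather than the occupied region itself. I would assume for contradiction that a false-name-proof and Pareto efficient $f$ exists, normalize by symmetry the value of $f$ on the full profile $I(\theta)=V$, and then construct a cyclic sequence of four profiles (as in Figures~\ref{fig:large-cycles-6} and \ref{fig:large-cycles-7}) in which each transition is justified either by Pareto efficiency pinning the outcome to one antipodal vertex, or by a false-name manipulation showing an agent could add identities to force a strictly-preferred (i.e., strictly farther) outcome. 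Closing the cycle of implications yields two incompatible values for $f$ on a common profile $\theta^{*}$, the contradiction.

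The main obstacle I anticipate is getting the direction of every false-name incentive correct under the single-dipped domain: in each forbidden case I must exhibit a concrete vertex at which some present agent, by adding fake identities on a specific set of currently-empty vertices, changes $\text{PE}$ so that a strictly farther (hence strictly preferred) alternative becomes selected, thereby violating false-name-proofness. Because ``farther is better'' inverts the single-peaked reasoning, the manipulating agent and the set of fake vertices will generally be the mirror image of those in Theorem~\ref{thm:cycle:peaked:large}. A secondary technical point is confirming that the removal lemma applies: I must check that the antipodal outcome remains Pareto efficient after each agent removal so that Lemma-style monotonicity forces the outcome to persist on $\theta^{*}$. Once these incentive directions and the Pareto-efficient-set computations are verified for $C_6$ and $C_7$, extending to larger even and odd cycles is routine, since enlarging the cycle only lengthens the antipodal arc without altering the structure of the four-profile contradiction.
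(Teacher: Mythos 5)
Your plan's overall architecture is the one the paper uses for odd $k$: normalize $f(\theta)=v_{1}$ on the full profile, chase a square of four profiles (for $C_{7}$: $\theta'$ with $I(\theta')=\{v_{1},v_{2},v_{6},v_{7}\}$, its mirror $\theta''$ with $I(\theta'')=\{v_{1},v_{2},v_{3},v_{7}\}$, and $\theta^{*}$ with $I(\theta^{*})=\{v_{1},v_{2},v_{7}\}$), justify each step by Pareto efficiency plus a false-name completion to the full profile, and derive two incompatible values of $f(\theta^{*})$. Your observation that for an occupied arc the Pareto efficient set is the antipodal arc is exactly the computation the paper relies on (e.g.\ $\text{PE}(\theta^{*})=\{v_{4},v_{5}\}$). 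One simplification you miss: for even $k$ you do not need to redo $C_{6}$ at all, because on an even cycle a dip at $v$ induces \emph{the same preference relation} as a peak at the antipodal vertex, so the single-dipped domain coincides with the single-peaked one and Theorem~\ref{thm:cycle:peaked:large} applies verbatim; your ``behaves very similarly'' can be upgraded to an exact identification, which disposes of all even $k\geq 6$ in one line.

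The genuine gap is your proposed single-dipped analogue of Lemma~\ref{lem:removal}, which is wrong as stated in both hypothesis and conclusion. The hypothesis $f(\theta)\notin I(\theta)$ gives no leverage under dipped preferences: what matters is not that the chosen vertex is unoccupied, but that it is a \emph{farthest} (hence most-preferred) point of some agent who remains after the removal. And the conclusion ``the outcome should not move'' is too strong on odd cycles: there each agent has \emph{two} farthest vertices and is indifferent between them, so false-name-proofness only forces the post-removal outcome to stay inside that two-element indifference set, not to stay put. Indeed, in the paper's $C_{7}$ argument the outcome is allowed to move: $f(\theta')\in\{v_{3},v_{4}\}$, and after removing $v_{6}$ the agent at $v_{7}$ (whose farthest set is $\{v_{3},v_{4}\}$) must not end up strictly worse off, which forces $f(\theta^{*})\in\{v_{3},v_{4}\}$; intersecting with $\text{PE}(\theta^{*})=\{v_{4},v_{5}\}$ then pins $f(\theta^{*})=v_{4}$, even though the outcome may have changed from $v_{3}$ to $v_{4}$. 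The correct general tool is simply: if $f$ is false-name-proof, then $f(\theta_{-v})\succsim_{\theta_{i}}f(\theta)$ for every agent $i$ present in $\theta_{-v}$; strict persistence is a corollary only when some remaining agent's weak upper contour set at $f(\theta)$ is a singleton, which is what happens in the peaked case (the agent sitting on $f(\theta)$) but not in the dipped case on odd cycles. If you replace your lemma by this statement and otherwise run your four-profile chase, the proof closes exactly as in the paper.
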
 


\begin{proof} 
 The 
 identical proof of Theorem~\ref{thm:cycle:peaked:large} applies
 for any even $k \geq 6$, 
 since a single-dipped preference
 over a cycle of even length, with a dip point $v$, 
 coincides with the single-peaked one 
 with the peak point that is antipodal to $v$. 
 
 We therefore focus on odd $k \geq 7$.
 Due to space limitations, we assume for the sake of contradiction 
 that
 a false-name-proof and Pareto efficient social choice function $f$ exists for 
 $C_{7}$,
 and w.l.o.g.\ that
 $f(\theta) = v_{1}$ for any $\theta$ 
 s.t.\ $I(\theta) = V$.

 Consider a type profile $\theta'$ 
 s.t.\  $I(\theta') = \{v_{1}, v_{2}, v_{6}, v_{7}\}$
 (see the top-right cycle in Fig.~\ref{fig:large-cycles-7-dipped}).
 Since $f$ is false-name-proof and Pareto efficient,
 $f(\theta')$ must be either $v_{3}$ or $v_{4}$;
 otherwise some agent has incentive to add fake identities.
 Furthermore, for the profile $\theta^{*}$ s.t.\ 
 $I(\theta^{*}) = \{v_{1}, v_{2}, v_{7}\}$,
 $\text{PE}(\theta^{*}) = \{v_{4}, v_{5}\}$ holds.
 Therefore, $f(\theta^{*}) = v_{4}$ holds;
 otherwise the agent located at $v_{7}$ has incentive to
 add fake identity on $v_{6}$, which moves the facility 
 to either $v_{3}$ or $v_{4}$.

 On the other hand, for another profile $\theta''$
 s.t.\ $I(\theta'') = \{v_{1}, v_{2}, v_{3}, v_{7}\}$
 (see the bottom-left cycle in Fig.~\ref{fig:large-cycles-7-dipped}),
 $f(\theta'')$ must be either $v_{5}$ or $v_{6}$ 
 due to symmetry. Therefore, for the above $\theta^{*}$,
 $f(\theta^{*}) = v_{5}$ must hold,
 which contradicts the condition of $f(\theta^{*}) = v_{4}$.
 Almost the same argument holds
 for any larger odd $k$.
\end{proof}  

\begin{figure}[t]
 \centering
  \begin{tikzpicture}[scale=1,transform shape]
   \tikzstyle{m}=[
   draw, 
   circle, 
   fill = white,
   thick, 
   minimum size = 3mm, 
   inner sep = 0pt,
   font = \scriptsize
   ];
   \tikzstyle{m'}=[
   draw, 
   circle,
   fill = gray!50,
   thick,
   minimum size = 3mm, 
   inner sep = 0pt,
   font = \scriptsize
   ];
   \node at (0,0) [circle, draw, minimum size = 18mm] (c) {};
   \node[style = m', label = {1*360/7+270/7:$v_{1}$}] at (c.1*360/7+270/7) {\tiny $f$};
   \node[style = m', label = {2*360/7+270/7:$v_{2}$}] at (c.2*360/7+270/7) {};
   \node[style = m', label = {3*360/7+270/7:$v_{3}$}] at (c.3*360/7+270/7) {};
   \node[style = m', label = {4*360/7+270/7:$v_{4}$}] at (c.4*360/7+270/7) {};
   \node[style = m', label = {5*360/7+270/7:$v_{5}$}] at (c.5*360/7+270/7) {};
   \node[style = m', label = {6*360/7+270/7:$v_{6}$}] at (c.6*360/7+270/7) {};
   \node[style = m', label = {7*360/7+270/7:$v_{7}$}] at (c.7*360/7+270/7) {};
   \node at (0,1.8) [] {\scriptsize \textbf{Profile} $\theta$};
   \node at (4,0) [circle, draw, minimum size = 18mm] (c) {};
   \node[style = m', label = {1*360/7+270/7:$v_{1}$}] at (c.1*360/7+270/7) {};
   \node[style = m', label = {2*360/7+270/7:$v_{2}$}] at (c.2*360/7+270/7) {};
   \node[style = m, label = {3*360/7+270/7:$v_{3}$}] at (c.3*360/7+270/7) {\tiny $f$};
   \node[style = m, label = {4*360/7+270/7:$v_{4}$}] at (c.4*360/7+270/7) {\tiny $f$};
   \node[style = m, label = {5*360/7+270/7:$v_{5}$}] at (c.5*360/7+270/7) {};
   \node[style = m', label = {6*360/7+270/7:$v_{6}$}] at (c.6*360/7+270/7) {};
   \node[style = m', label = {7*360/7+270/7:$v_{7}$}] at (c.7*360/7+270/7) {};
   \node at (4,1.8) [] {\scriptsize \textbf{Profile} $\theta'$};
   \node at (0,-3.5) [circle, draw, minimum size = 18mm] (c) {};
   \node[style = m', label = {1*360/7+270/7:$v_{1}$}] at (c.1*360/7+270/7) {};
   \node[style = m', label = {2*360/7+270/7:$v_{2}$}] at (c.2*360/7+270/7) {};
   \node[style = m', label = {3*360/7+270/7:$v_{3}$}] at (c.3*360/7+270/7) {};
   \node[style = m, label = {4*360/7+270/7:$v_{4}$}] at (c.4*360/7+270/7) {};
   \node[style = m, label = {5*360/7+270/7:$v_{5}$}] at (c.5*360/7+270/7) {\tiny $f$};
   \node[style = m, label = {6*360/7+270/7:$v_{6}$}] at (c.6*360/7+270/7) {\tiny $f$};
   \node[style = m', label = {7*360/7+270/7:$v_{7}$}] at (c.7*360/7+270/7) {};
   \node at (0,-5.3) [] {\scriptsize \textbf{Profile} $\theta''$};
   \node at (4,-3.5) [circle, draw, minimum size = 18mm] (c) {};
   \node[style = m', label = {1*360/7+270/7:$v_{1}$}] at (c.1*360/7+270/7) {};
   \node[style = m', label = {2*360/7+270/7:$v_{2}$}] at (c.2*360/7+270/7) {};
   \node[style = m, label = {3*360/7+270/7:$v_{3}$}] at (c.3*360/7+270/7) {};
   \node[style = m, label = {4*360/7+270/7:$v_{4}$}] at (c.4*360/7+270/7) {};
   \node[style = m, label = {5*360/7+270/7:$v_{5}$}] at (c.5*360/7+270/7) {};
   \node[style = m, label = {6*360/7+270/7:$v_{6}$}] at (c.6*360/7+270/7) {};
   \node[style = m', label = {7*360/7+270/7:$v_{7}$}] at (c.7*360/7+270/7) {};
   \node at (4,-5.3) [] {\scriptsize \textbf{Profile} $\theta^{*}$};
%
  \path[
  draw, 
  line width = 1mm,
  gray!25,
  shorten > = 16mm, shorten < = 13mm]
  (0,0) edge[->] (4,0)
  (0,0) edge[->] (0,-3.5)
  (4,0) edge[->] (4,-3.5)
  (0,-3.5) edge[->] (4,-3.5);
   \node at (5, -1.8) [] {\scriptsize $f(\theta^{*}) = v_{4}$};
   \node at (2, -3.2) [] {\scriptsize $f(\theta^{*}) = v_{5}$};
  \end{tikzpicture}
 \caption{Type profiles used for $C_{7}$ in the proof of Theorem~\ref{thm:cycle:dipped:large}.
 On each gray vertex there is some agent, and one of the vertices with label $f$ must be chosen under the profile. The proof derives a contradiction on $\theta^{*}$.
 }
 \label{fig:large-cycles-7-dipped}
\end{figure}
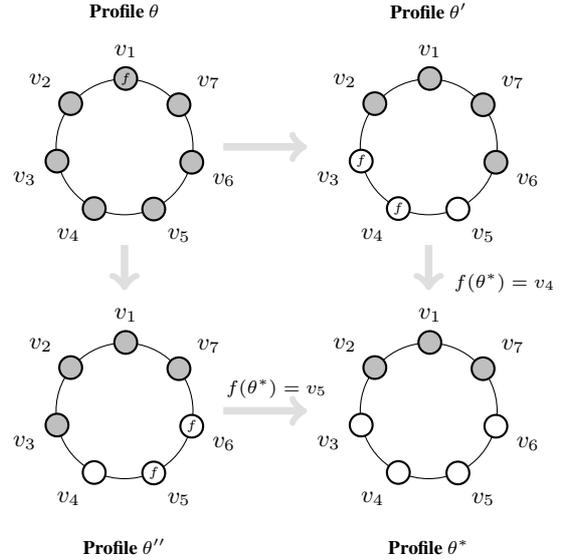

\section{CONCLUSIONS}
\label{sec:conclu}

We tackled 
whether 
there exists a false-name-proof and Pareto efficient 
social choice function for the facility location problem 
under a given graph.
We gave complete answers for path, tree, and cycle graphs,
regardless 
whether the preferences are single-peaked or
single-dipped.
For hypergrid graphs, 
an open problem remains
for single-dipped preferences.
%
When such social choice functions exist,
completely characterizing their 
class of such social choice functions is 
crucial future work,
as many other works did for continuous structure~\cite{moulin:PC:1980,schummer:JET:2002,todo:AAMAS:2011}.
Investigating randomized social choice functions is another interesting direction.

\ack 
This work is partially supported by JSPS KAKENHI Grants JP17H00761 and JP17H04695, and JST SICORP JPMJSC1607. 
%





\begin{thebibliography}{10}

\bibitem{alon:MOR:2010}
Noga Alon, Michal Feldman, Ariel~D. Procaccia, and Moshe Tennenholtz,
  `Strategyproof approximation of the minimax on networks', {\em Mathematics of
  Operations Research}, {\bf 35}(3),  513--526, (2010).

\bibitem{alon:DM:2010}
Noga Alon, Michal Feldman, Ariel~D. Procaccia, and Moshe Tennenholtz, `Walking
  in circles', {\em Discrete Mathematics}, {\bf 310}(23),  3432--3435,
  (2010).

\bibitem{anastasiadis:AAMAS:2018}
Eleftherios Anastasiadis and Argyrios Deligkas, `Heterogeneous facility
  location games', {\em Proc.\ the 17th International Conference on
  Autonomous Agents and MultiAgent Systems (AAMAS '18)}, pp. 623--631, (2018). 

\bibitem{aziz:AAMAS:2009}
Haris Aziz and Mike Paterson, `False name manipulations in weighted voting
  games: splitting, merging and annexation', {\em Proc.\ the Eighth
  International Joint Conference on Autonomous Agents and Multiagent Systems (AAMAS '09)},
  pp. 409--416, (2009).

\bibitem{barbera:SCW:2012}
Salvador Barber{\`a}, Dolors Berga, and Bernardo Moreno, `Domains, ranges and
  strategy-proofness: the case of single-dipped preferences', {\em Social
  Choice and Welfare}, {\bf 39}(2),  335--352, (2012).

\bibitem{bu:EL:2013}
Nanyang Bu, `Unfolding the mystery of false-name-proofness', {\em Economics
  Letters}, {\bf 120}(3),  559--561, (2013).

\bibitem{conitzer:WINE:2008}
Vincent Conitzer, `Anonymity-proof voting rules', {\em Proc.\ the Fourth Workshop on Internet and Network
  Economics (WINE '08)}, pp. 295--306, (2008).

\bibitem{dokow:EC:2012}
Elad Dokow, Michal Feldman, Reshef Meir, and Ilan Nehama, `Mechanism design on
  discrete lines and cycles', {\em Proc.\ the 13th {ACM} Conference on Electronic Commerce
  (EC '12)}, pp. 423--440, (2012).

\bibitem{escoffier:ADT:2011}
Bruno Escoffier, Laurent Gourv{\`e}s, Nguyen Kim~Thang, Fanny Pascual, and
  Olivier Spanjaard, `Strategy-proof mechanisms for facility location games
  with many facilities', {\em Proc.\ the Second International
  Conference on Algorithmic Decision Theory (ADT '11)}, pp. 67--81, (2011).

\bibitem{feigenbaum:ITEC:2015}
Itai Feigenbaum and Jay Sethuraman, `Strategyproof mechanisms for
  one-dimensional hybrid and obnoxious facility location models', {\em
  Proc.\ the 2015 {AAAI} Workshop on Incentive and Trust in E-Communities}, (2015).

\bibitem{fong:AAAI:2018}
Chi Kit~Ken Fong, Minming Li, Pinyan Lu, Taiki Todo, and Makoto Yokoo,
  `Facility location game with fractional preferences', {\em Proc.\ 
	the 32nd AAAI Conference on Artificial Intelligence (AAAI '18)}, 
	pp. 1039--1046, (2018).

\bibitem{klaus:TD:2001}
Bettina Klaus, `Target rules for public choice economies on tree networks and
  in euclidean spaces', {\em Theory and Decision}, {\bf 51}(1),  13--29, (2001).

\bibitem{lahiri:MSS:2017}
Abhinaba Lahiri, Hans Peters, and Ton Storcken, `Strategy-proof location of
  public bads in a two-country model', {\em Mathematical Social Sciences}, {\bf
  90},  150--159, (2017).

\bibitem{lesca:AAMAS:2014}
Julien Lesca, Taiki Todo, and Makoto Yokoo, `Coexistence of utilitarian
  efficiency and false-name-proofness in social choice', {\em Proc.\ the 13th
	International Conference on Autonomous Agents and Multi-Agent Systems (AAMAS '14)}, pp. 1201--1208, (2014).

\bibitem{manjunath:IJGT:2014}
Vikram Manjunath, `Efficient and strategy-proof social choice when preferences
  are single-dipped', {\em International Journal of Game Theory}, {\bf 43}(3),
  579--597, (2014).

\bibitem{melo:EJOR:2009}
M.~Teresa Melo, Stefan Nickel, and Francisco Saldanha{-}da{-}Gama, `Facility
  location and supply chain management -- A review', {\em European
  Journal of Operational Research}, {\bf 196}(2),  401--412, (2009).

\bibitem{moulin:PC:1980}
Herv{\'e} Moulin, `On strategy-proofness and single peakedness', {\em Public
  Choice}, {\bf 35}(4),  437--455, (1980).

\bibitem{nehama:AAMAS:2019}
Ilan Nehama, Taiki Todo, and Makoto Yokoo, `Manipulations-resistant facility
  location mechanisms for {ZV}-line graphs', {\em Proc.\ the
  18th International Conference on Autonomous Agents and Multi-Agent
  Systems (AAMAS '19)}, pp. 1452--1460, (2019).

\bibitem{okada:PRIMA:2019}
Nodoka Okada, Taiki Todo, and Makoto Yokoo, `SAT-based automated mechanism
  design for false-name-proof facility location', {\em Proc.\ the 22nd International Conference on
  Principles and Practice of Multi-Agent Systems (PRIMA '19)}, pp. 321--337,
  (2019).

\bibitem{ono:PRIMA:2017}
Tomohiro Ono, Taiki Todo, and Makoto Yokoo, `Rename and false-name
  manipulations in discrete facility location with optional preferences',
  {\em Proc.\ the 20th International Conference on Principles and Practice 
	of Multi-Agent Systems (PRIMA '17)}, pp. 163--179, (2017).

\bibitem{procaccia:TEAC:2013}
Ariel~D. Procaccia and Moshe Tennenholtz, `Approximate mechanism design without
  money', {\em {ACM} Transactions on Economics and Computation}, {\bf 1}(4),
  18:1--18:26, (2013).

\bibitem{schummer:JET:2002}
James Schummer and Rakesh~V. Vohra, `Strategy-proof location on a network',
  {\em Journal of Economic Theory}, {\bf 104}(2), 405--428, (2002).

\bibitem{serafino:ECAI:2014}
Paolo Serafino and Carmine Ventre, `Heterogeneous facility location without
  money on the line', {\em Proc.\ the 21st European Conference on
  Artificial Intelligence (ECAI '14)}, pp.
  807--812, (2014).

\bibitem{sonoda:AAAI:2016}
Akihisa Sonoda, Taiki Todo, and Makoto Yokoo, `False-name-proof locations of
  two facilities: Economic and algorithmic approachess', {\em Proc.
  the 30th {AAAI} Conference on Artificial Intelligence (AAAI '16)}, pp. 615--621, (2016).

\bibitem{sui:IJCAI:2013}
Xin Sui, Craig Boutilier, and Tuomas Sandholm, `Analysis and optimization of
  multi-dimensional percentile mechanisms', {\em Proc.\ 
  the 23rd International Joint Conference on Artificial Intelligence (IJCAI '13)}, pp. 367--374, (2013).

\bibitem{todo:AAMAS:2013}
Taiki Todo and Vincent Conitzer, `False-name-proof matching', {\em Proc.\ 
  the 12th International Conference on Autonomous Agents and Multi-Agent Systems (AAMAS '13)}, pp. 311--318, (2013).

\bibitem{todo:AAMAS:2011}
Taiki Todo, Atsushi Iwasaki, and Makoto Yokoo, `False-name-proof mechanism
  design without money', {\em Proc.\ 10th International Conference on Autonomous
  Agents and Multiagent Systems (AAMAS '11)}, pp. 651--658, (2011).

\bibitem{tsuruta:AAMAS:2015}
Shunsuke Tsuruta, Masaaki Oka, Taiki Todo, Yuko Sakurai, and Makoto Yokoo,
  `Fairness and false-name manipulations in randomized cake cutting', {\em
	Proc.\  the 14th International Conference on Autonomous Agents and
  Multiagent Systems (AAMAS '15)}, pp. 909--917, (2015).

\bibitem{wada:AAMAS:2018}
Yuho Wada, Tomohiro Ono, Taiki Todo, and Makoto Yokoo, `Facility location with
  variable and dynamic populations', {\em Proc.\ the 17th
  International Conference on Autonomous Agents and MultiAgent Systems (AAMAS '18)}, pp. 336--344, (2018).

\bibitem{yokoo:GEB:2004}
Makoto Yokoo, Yuko Sakurai, and Shigeo Matsubara, `The effect of false-name
  bids in combinatorial auctions: new fraud in internet auctions', {\em Games
  and Economic Behavior}, {\bf 46}(1),  174--188, (2004).

\bibitem{zhao:ECAI:2014}
Dengji Zhao, Siqi Luo, Taiki Todo, and Makoto Yokoo, `False-name-proof
  combinatorial auction design via single-minded decomposition', {\em Proc.\ 
	21st European Conference on Artificial Intelligence (ECAI '14)}, pp. 945--950, (2014).

\end{thebibliography}

\end{document}